\theoremstyle{plain}
\newtheorem{theorem}{Theorem}
\newtheorem{lemma}[theorem]{Lemma}
\newtheorem{proposition}[theorem]{Proposition}
\theoremstyle{remark}
\newtheorem{remark}[theorem]{Remark}
\newcommand\cR{\mathcal{R}}
\newcommand\R{\mathbb{R}}
\renewcommand\P{\mathbb{P}}
\newcommand\E{\mathbb{E}}
\newcommand\eps{\varepsilon}
\DeclareMathOperator*{\argmax}{argmax}
\DeclareMathOperator*{\argmin}{argmin}
\newcommand{\iid}{\textnormal{iid}}
\newcommand{\ind}{\textnormal{ind}}
\newcommand{\simiid}{\stackrel{\iid}{\sim}}
\newcommand{\simind}{\stackrel{\ind}{\sim}}
\newcommand{\gren}{\hat{f}_m}
\newcommand{\lfdr}{\textnormal{lfdr}}
\newcommand{\maxlfdr}{\textnormal{max-lfdr}}
\newcommand{\FDR}{\textnormal{FDR}} 
\newcommand{\Regret}{\textnormal{Regret}} 
\newcommand{\BH}{\textnormal{BH}} 
\newcommand{\Fix}{\textnormal{Fix}}
\newcommand{\q}{q} % generic tuning parameter for FDR controlling procedure
\newcommand{\tol}{\ell} % generic tuning parameter for lfdr controlling procedure
\newcommand{\edit}[1]{{#1}} %\color{blue}
\title{The edge of discovery: Controlling the local\\ false discovery rate at the margin}
\author[1]{Jake A. Soloff}
\author[1]{Daniel Xiang}
\author[2]{William Fithian}
\affil[1]{Department of Statistics, University of Chicago} \affil[2]{Department of Statistics, University of California, Berkeley} 
\begin{document}
\maketitle

\begin{abstract}
Despite the popularity of the false discovery rate (\FDR) as an error control metric for large-scale multiple testing, its close Bayesian counterpart the local false discovery rate (\lfdr), defined as the posterior probability that a particular null hypothesis is false, is a more directly relevant standard for justifying and interpreting individual rejections. However, the lfdr is difficult to work with in small samples, as the prior distribution is typically unknown. We propose a simple multiple testing procedure and prove that it controls the expectation of the maximum $\lfdr$ across all rejections; equivalently, it controls the probability that the rejection with the largest $p$-value is a false discovery. Our method operates without knowledge of the prior, assuming only that the $p$-value density is uniform under the null and decreasing under the alternative. We also show that our method asymptotically implements the oracle Bayes procedure for a weighted classification risk, optimally trading off between false positives and false negatives. We derive the limiting distribution of the attained maximum lfdr over the rejections, and the limiting empirical Bayes regret relative to the oracle procedure.
\end{abstract}

\section{Introduction}\label{sec-intro}

A common goal in applications of multiple hypothesis testing is to identify a relatively short list of candidate  ``discoveries'' that are sufficiently promising to undertake some costly further action. In scientific applications, for example, each discovery may be the focus of a follow-up experiment \edit{that wastes resources} if the apparent discovery was only a mirage. The {\em false discovery rate} \citep[FDR,][]{benjamini1995controlling} has become a cornerstone of modern large-scale multiple testing because it directly measures the rate of this wastage:\footnote{\edit{See \citet{benjamini2000adaptive} for a review of the history of FDR and the Benjamini-Hochberg (BH) procedure, including the work of Eklund and Seeger in the 1960s. See also \citet{seeger1968note} for early developments.}}
\begin{quote}
    [T]he proportion of errors in the
pool of candidates is of great economical significance since follow-up studies are costly, and thus avoiding multiplicity control is costly. Indeed, the $\FDR$ criterion is economically interpretable; when considering a potential threshold, the adjusted $\FDR$ gives the proportion of the investment that is about to be wasted on false leads. \citep{reiner2003identifying}
\end{quote} 
An analyst who controls $\FDR$ at level~$\q = 5\%$, then, is willing to waste resources following up on one false discovery in exchange for every nineteen real discoveries. 

Carrying this reasoning further, however, we can apply the same cost-benefit analysis to each individual rejection, not only to the list of rejections taken as a whole. In economic terminology, we should consider not only the {\em average utility} of our entire rejection set, but also the {\em marginal utility} of each rejection we make, since we always have the option to exclude any rejection that is not individually promising. For example, in Section~\ref{sec-simulations} we reproduce the simulations of~\citet{benjamini1995controlling} and find in some settings that, even while the Benjamini--Hochberg (BH) procedure controls $\FDR$ at level~$\q = 5\%$, the {\sl last discovery} (i.e. the discovery with the largest~$p$-value) is false more than~$30\%$ of the time. In such settings, unless we are willing to suffer one false discovery for every two true discoveries, we would be better served by excluding the last rejection from the BH rejection set. More generally, to decide where to set our rejection threshold, we should ask about the proportion of false leads among the incremental rejections that we would add or remove by raising or lowering it.

The likelihood that an individual discovery is a false lead is called its {\em local false discovery rate} \citep[$\lfdr$,][]{efron2001empirical}. For $i=1,\ldots,m$, let $H_i = 0$ if the $i$th hypothesis is null and $H_i = 1$ otherwise, and consider the simple {\em Bayesian two-groups model}
\begin{align}\label{eq-two-groups}
p_i \mid H_i=h \simind f_h, \qquad \textnormal{with} \qquad H_i \simiid\textnormal{Bern}(1-\pi_0), \qquad \textnormal{for } i = 1,\dots,m,
\end{align}
where $f_0$ and $f_1$ are densities (null and alternative, respectively) supported on the unit interval~$[0,1]$, and the null proportion is $\pi_0\in [0,1]$. We will assume throughout that $f_0 = 1_{[0,1]}$, the uniform density. Let $f \coloneqq \pi_0 + (1-\pi_0)f_1$ denote the common mixture density of the $p$-values in model~\eqref{eq-two-groups}, and let $F(t) \coloneqq \int_0^t f(u)\,du$ denote the corresponding cumulative distribution function (cdf). The lfdr is then defined as the posterior probability that $H_i = 0$, conditional on the observed $p$-value~$p_i$:
\begin{align}\label{eq-def-lfdr}
    \lfdr(t) \coloneqq \P\left\{H_i = 0\mid p_i = t\right\} = \frac{\pi_0}{f(t)}.
\end{align}
If we knew the problem parameters $\pi_0$ and $f_1$, then the definition~\eqref{eq-def-lfdr} would neatly solve the problem posed above: we should reject only those hypotheses whose lfdr is below the break-even threshold of our cost-benefit tradeoff. Concretely, let $\omega > 0$ define the ratio between the cost of each false discovery and the benefit of each true discovery. Then the utility of making $R$ rejections, of which $V$ are false discoveries, is proportional to $(R-V) - \omega V$, and a simple calculation shows that we should reject the $i$th hypothesis if and only if $\lfdr(p_i) \leq \alpha \coloneqq \frac{1}{1+\omega}$. 

We will usually work under the additional assumption that $f_1(t)$ is non-increasing in $t$, or equivalently that $\lfdr(t)$ is non-decreasing, so that smaller $p$-values represent stronger evidence against the null. This assumption is common in multiple testing \citep[see, e.g.,][]{genovese2004stochastic,langaas2005estimating,strimmer2008unified}, and it lets us restrict our attention to procedures that reject all $p$-values below a given threshold: if $f_1$ is non-increasing then rejecting when $\lfdr(p_i) \leq \alpha$ is equivalent to rejecting when $p_i$ is sufficiently small.

In practice, $\pi_0$ and $f_1$ are typically unknown and must be estimated from the data, and many estimators have been proposed; see e.g. \citet{efron2001empirical, pounds2003estimating, scheid2004stochastic, aubert2004determination, efron2004large, efron2008microarrays, liao2004mixture, pounds2004improving, robin2007semi, strimmer2008unified, muralidharan2010empirical, patra2016estimation, stephens2017false}. To the best of our knowledge, however, there are no known finite-sample lfdr control guarantees for multiple testing procedures based on these methods. By contrast, simple, robust, and well-known methods like the Benjamini--Hochberg (BH) procedure of \citet{benjamini1995controlling} enjoy finite-sample $\FDR$ control without requiring the analyst to model the $p$-value distribution.

In this work, we introduce a new error control metric that measures the lfdr of a multiple testing procedure's least promising rejection. We represent a generic multiple testing method as a function $\cR(p_1,\ldots,p_m)$ returning an index set $\cR \subseteq \{1,\ldots,m\}$, where hypothesis $i$ is rejected if and only if $i \in \cR$. We say the procedure's {\em max-lfdr} is
\begin{equation}\label{eq:def-max-lfdr}
    \maxlfdr(\cR) \coloneqq
    \E\left[\max_{i\in \cR}\;\lfdr(p_i)\right],
\end{equation}
defining the maximum as zero if no rejections are made. 

We can consider the $\maxlfdr$ as a frequentist error control criterion in the two-groups model \eqref{eq-two-groups}, which may not be a fully Bayesian model if we treat $\pi_0$ and $f_1$ as unknown. If $f_1$ is non-increasing, then the $\maxlfdr$ of $\cR$ coincides with the probability that the last rejection is a false discovery. This latter definition extends beyond the two-groups model, to the setting where the Bernoulli variables $H_1,\ldots,H_m$ are fixed rather than random.

In addition to the $\maxlfdr$ criterion, we also introduce a simple multiple testing procedure, the {\em support line} (SL) procedure, and show that it provably controls the max-lfdr under mild assumptions. Define the $p$-value order statistics $p_{(1)} \leq \cdots \leq p_{(m)}$, and let $p_{(0)} = 0$ by convention. Then our procedure rejects $p$-values up to the last minimizer
\begin{align}\label{eq-def-procedure}
R_{\tol} \coloneqq \argmin_{k=0,\ldots,m}\; p_{(k)} - \frac{\tol k}{m}.
\end{align}
That is, we reject $\cR_{\tol} \coloneqq \{i:\; p_i \leq \tau_{\tol}\}$, for the threshold $\tau_{\tol} = p_{(R_{\tol})}$. Under the two-groups model~\eqref{eq-two-groups}, with non-increasing $f_1$, we show in Theorem~\ref{thm-main} that, for~$\tol\le 1$,
\[
\maxlfdr(\cR_\tol) = \pi_0 \tol.
\]
Our method can be implemented without knowing $\pi_0$ or $f_1$, apart from the shape constraint, and bears a close relationship to the BH procedure, which replaces $R_{\tol}$ in \eqref{eq-def-procedure} with
\[
    R^{\BH}_{\q} \coloneqq \max\left\{k\in \{0,\ldots,m\} : p_{(k)} \leq \frac{\q k}{m} \right\},
\]
rejecting $\cR_{\q}^{\BH} \coloneqq \{i:\; p_i \leq \tau_{\q}^{\BH}\}$, for $\tau_{\q}^{\BH} \coloneqq \q R_{\q}^{\BH}/m \geq p_{(R_{\q}^{\BH})}$. \edit{The BH method makes at least as many rejections as the SL method at the same level~$\q = \tol$, i.e. $R_{\tol} \leq R_{\tol}^{\BH}$; in this case,} both methods make at least one rejection if and only if $p_{(k)} \leq \frac{\tol k}{m}$ for some $k \geq 1$. However, as we will argue, the SL method should generally be run with a strictly larger \edit{level $\tol > \q$} than we would use for BH. Figure~\ref{fig-method} illustrates the relationship between the two methods by reproducing the familiar plot of the BH procedure as an operation on the order statistics $p_{(1)}, \ldots, p_{(m)}$.

\begin{figure}[t!]
	\centering
	\centerline{
    \includegraphics{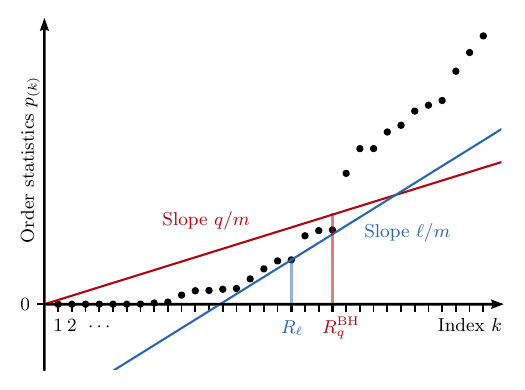} 
	}
    \caption{The order statistics~$p_{(k)}$ of the $p$-values as a function of the index~$k$, shown in black. The BH procedure, in red, finds the last index~$R_\q^\BH$ such that~$p_{(R_\q^\BH)}$ falls below the ray of slope~$\q/m$; by contrast, our procedure finds the last boundary point~$(R_\tol, p_{(R_\tol)})$ of the supporting line of slope~$\tol/m$.}\label{fig-method}
\end{figure}

\subsection{Multiple testing and the weighted classification loss}

To formalize our analysis above, define the per-instance {\em weighted classification loss}:
\begin{align}\label{eq-loss}
 L_\omega(H, \cR) \coloneqq \frac{(1+\omega)V - R}{m},
\end{align}
\edit{where $R = |\cR|$ denotes the number of rejections and $V = \sum_{i\in \cR}(1-H_i)$ denotes the number of false discoveries.} This loss can be derived, up to additive and multiplicative constants, by viewing each of the $m$ hypotheses as a binary classification problem, where we incur a cost $c_1$ for each type I error or false discovery ($i\in \cR$, but $H_i = 0$), and cost $c_2$ from each type II error or false non-discovery ($i \notin \cR$, but $H_i = 1$). If the total number of non-nulls is $m_1 = \sum_i H_i$, then there are $m_1 - (R - V)$ false non-discoveries, so the total loss over all $m$ instances is
\[
c_1 V + c_2 (m_1 - (R - V)) = c_2 m  \cdot L_\omega(H, \cR) +  c_2 m_1,
\]
where $\omega = c_1/c_2$ is the ratio between the two misclassification costs. $L_\omega$ as defined in \eqref{eq-loss} is normalized so that rejecting nothing incurs zero loss, and each true discovery has value $1/m$.

Under the two-groups model \eqref{eq-two-groups}, \citet[][Theorem 2]{sun2007oracle} show that the corresponding Bayes risk~$\E L_\omega(H, \cR)$ is minimized by the oracle procedure
\begin{equation}\label{eq-def-oracle}
\cR^* \coloneqq \left\{i:\;\lfdr(p_i) \leq \alpha\right\}, \quad \text{where} \quad \alpha = \frac{1}{1+\omega}.
\end{equation}
The ratio $\omega$ specifies the ``break-even exchange rate'' at which we are willing to trade true discoveries for false leads; e.g., if $\omega = 19$ then we are willing to suffer a single false discovery for exactly $19$ true discoveries, and we should reject a hypothesis only if its $\lfdr$ falls below the break-even tolerance $\alpha = 0.05$. If $f_1$ is non-increasing, then the oracle procedure reduces to thresholding $p$-values at a fixed threshold
\begin{align}\label{eq-oracle-threshold}
    \cR^* = \{i:\; p_i \leq \tau^*\}, \quad \text{for}\quad \tau^* \coloneqq \max\{t\in [0,1]:\; \lfdr(t) \leq \alpha\},
\end{align}
with $\tau^* = 0$ if no such threshold exists.

Our method can be directly interpreted as minimizing an empirical proxy of the weighted classification loss. For a candidate threshold $t \in [0,1]$, the expected number of null $p$-values below the threshold is $m \pi_0 t$. If $\pi_0$ is known, we can estimate $V \approx m \pi_0 t$ to obtain a running estimator of the loss from thresholding $p$-values at $t$:
\begin{equation}\label{eq-running-est}
\hat{L}_{\omega}(t; \pi_0) = \frac{(1 + \omega) m \pi_0 t - m F_m(t)}{m} = (1 + \omega) \left(\pi_0 t - \alpha F_m(t)\right),
\end{equation}
where $F_m(t)$ represents the empirical cumulative distribution function (ecdf) of the $p$-values:
\[
F_m(t) \coloneqq \frac{1}{m} \sum_{i=1}^m 1\{p_i \leq t\}.
\]
Because $\hat{L}_{\omega}(t; \pi_0)$ is continuously increasing except at the order statistics, it is minimized at one of the order statistics, or at $p_{(0)}=0$:
\[
\argmin_{k = 0, 1, \ldots, m} \hat{L}_{\omega}(p_{(k)}; \pi_0) 
\;=\; \argmin_{k = 0, 1, \ldots, m} \pi_0 p_{(k)} - \frac{\alpha k}{m}.
\]
Comparing the last expression to the definition of our procedure in~\eqref{eq-def-procedure}, we see that $\hat{L}_{\omega}(t; \pi_0)$ is minimized at $t=\tau_{\tol}$ for $\tol=\alpha/\pi_0$. By Theorem~\ref{thm-main}, we then have $\maxlfdr(\cR_{\tol}) \le \alpha$, with equality as long as~$\alpha\le \pi_0$.

By contrast, $\tau^{\BH}_{\q}$ for $\q=\alpha/\pi_0$ is the largest value of $t$ that gives $\hat{L}_{\omega}(t; \pi_0) = 0$, the same loss we would achieve by rejecting nothing at all. In other words, the BH procedure at level $\alpha/\pi_0$ only aims to break even; to do better, we should run BH at a strictly smaller level $\q < \alpha/\pi_0$. Thus, we view $\q$ as a tuning parameter whose correspondence to the cost ratio $\omega$ is generally unknown.

To select $\tol$ for our SL procedure when $\pi_0$ is unknown, we can either conservatively bound $\pi_0 \leq 1$ and run the procedure at $\tol=\alpha$, or estimate $\pi_0$ and use $\tol=\alpha/\hat\pi_0$ \edit{(see Section~\ref{sec-estimating-pi0})}. To avoid confusion, we will always use the notation $\tol$ to represent \edit{the SL procedure's} tuning parameter \edit{and $\q$ to represent the BH procedure's tuning parameter, reserving} $\alpha = \frac{1}{1+\omega}$ to represent the true target $\lfdr$, defined in terms of the cost ratio $\omega$.

Our procedure can alternatively be derived as a plug-in maximum likelihood estimator (MLE) of the oracle procedure $\cR^*$, where we estimate $f(t)$ using Grenander's nonparametric MLE for a non-increasing density \citep{grenander1956theory}:
\begin{align}\label{eq-gren}
\gren \coloneqq \argmax_{\substack{g : [0,1]\to \R_+ \\ \text{ non-increasing density}}} ~\frac{1}{m}\sum_{i=1}^m \log g(p_i).
\end{align}
As we will see in Section~\ref{sec-grenander}, $\tau_\tol$ is also the largest value $t \in [0,1]$ for which $\gren(t) \geq \tol^{-1}$. Thus, if we run our procedure at $\tol=\alpha/\pi_0$, we have
\[
\cR_{\alpha/\pi_0}  = \left\{i:\; \gren(p_i) \geq (\alpha/\pi_0)^{-1}\right\} = \left\{i:\; \frac{\pi_0}{\gren(p_i)} \leq \alpha\right\}.
\]
As above, if $\pi_0$ is unknown, we can either estimate it or conservatively bound $\pi_0 \leq 1$.

The relationship between our method and the Grenander estimator is convenient for asymptotic analysis because the latter is very well studied; see the book by \citet{groeneboom2014nonparametric} for a thorough treatment. The Grenander estimator has previously been considered for estimating the~$\lfdr$ \citep{strimmer2008unified} as well as for estimating the null proportion~$\pi_0$ \citep{langaas2005estimating}. \edit{The density estimator}~$\gren$ may be efficiently computed via the pool adjacent violators algorithm~\citep[see, e.g.,][Chapter 1]{RWD88}\edit{, but the definition of $R_\tol$ in \eqref{eq-def-procedure} provides a way to characterize and run our procedure without explicitly computing~$\gren$}.

\subsection{The max-lfdr and the FDR}

The $\maxlfdr$ in \eqref{eq:def-max-lfdr} and the $\FDR$ are two different error criteria that both appeal to the logic of trading off true and false discoveries. The key difference is that the $\FDR$, defined as
\[
\FDR(\cR) \coloneqq \E\left[\frac{V}{R} \cdot 1\{R > 0\}\right],
\]
measures the likelihood that a {\em randomly selected} rejection is null, whereas the $\maxlfdr$~\eqref{eq:def-max-lfdr} instead measures the likelihood that the {\em least promising} rejection is null. In both cases the event in question is deemed not to have occurred if $R=0$, so that under the global null (all $H_i=0$, almost surely), both criteria reduce to the probability of making a single rejection.

Throughout this section, we will restrict our attention to procedures that reject the $R$ hypotheses with the smallest $p$-values. That is, we assume a procedure $\cR$ rejects $H_{(1)}, \ldots, H_{(R)}$, where $H_{(k)}$ represents the hypothesis corresponding to $p_{(k)}$. If $f_1$ is non-increasing, then the procedure's {\em last rejection} $H_{(R)}$ is the least promising, and the $\maxlfdr$ can be equivalently characterized as the probability that the last rejection is a false discovery:
\begin{equation}\label{eq:last-rejection}
\maxlfdr(\cR) = \E\left[\lfdr\left(p_{(R)}\right) \cdot 1\{R > 0\} \right] = \P\left\{H_{(R)} = 0, R > 0\right\}.
\end{equation}
If $\maxlfdr(\cR) > \alpha = \frac{1}{1+\omega}$, then we can improve $\cR$ by excluding its last discovery.\footnote{Without the shape constraint on $f_1$, $\maxlfdr > \alpha$ still implies that the analyst could improve the procedure by removing the least promising rejection, which may not be the same as the last rejection. However, this improvement is only feasible if the analyst can recognize which rejection is least promising.} Let $\cR^{(-1)}$ denote the procedure that makes one fewer rejection than $\cR$, meaning it rejects $H_{(1)}, \ldots, H_{(R-1)}$ if $R > 0$, and makes no rejections if $R=0$. Then we have
\begin{align*}
\E[L_{\omega}(H, \cR) - L_{\omega}(H, \cR^{(-1)})] 
&\;=\; \frac{1}{m}\E\left[(1 + \omega) 1\{H_{(R)} = 0, R>0\} -  1\{R > 0\}\right]\\[5pt]
&\;=\; \frac{1+\omega}{m} \left( \maxlfdr(\cR) - \alpha \P\{R>0\} \right),
\end{align*}
which is positive if $\maxlfdr(\cR) > \alpha$. The converse, that dropping the last rejection does not improve the risk if $\maxlfdr(\cR) \leq \alpha$, is almost true if $\P\{R>0\} \approx 1$, but is not true in general: under the global null, for example, any procedure is improved by making fewer rejections.

This thought experiment --- what if we dropped the last rejection? --- is at the heart of our motivation for proposing the $\maxlfdr$ as an error criterion. Even when a rejection set's average quality is high, the rejections near the threshold may be recognizably bad bets. In that case, we are better off pruning our rejection set until all of the rejections that remain are individually worth following up on. 
Because $\maxlfdr(\cR) \geq \FDR(\cR)$, controlling the $\maxlfdr$ is more conservative than controlling $\FDR$  at the same level \edit{$\q = \tol$}, in many cases considerably so. From this, it is tempting to conclude that $\maxlfdr$ control is an inherently more conservative goal than $\FDR$ control, but this conclusion would be mistaken. An analyst whose break-even exchange rate is $\omega = 9$ and break-even tolerance is $\alpha = 0.1$, for example, would never choose a method with a $10\%$ $\FDR$; the resulting rejection set would be no better on average than rejecting nothing at all, so there would be no point in collecting the data in the first place. Thus, an analyst who is satisfied with a $10\%$ $\FDR$ must have a larger break-even tolerance, say $\alpha = 0.2$ or $0.3$.

By the same token, it would be unfair to evaluate the risk under $L_\omega$ of the $\BH$ procedure at level $\q = \alpha = \frac{1}{1+\omega}$, since an analyst whose break-even tolerance is $\alpha$ would  want to control $\FDR$ at a strictly smaller level $\q$, like $\alpha/2$ or $\alpha/10$. However, as we show in Section~\ref{sec-popn-regret}, the performance of $\BH(\q)$ with such {\em a priori} choices of $\q$ can depend sensitively on the unknown alternative density $f_1$.

\subsection{Outline and contributions}

In Section~\ref{sec-control}, we state and prove our main result, that $\maxlfdr(\cR_\tol) = \pi_0\tol$ under the Bayesian two-groups model with non-increasing $f_1$. Even without monotonicity of $f_1$, we have $\P\{H_{(R_{\tol})} = 0, R_{\tol}>0\} = \pi_0\tol$, but monotonicity ensures that the lfdr is not out of control for rejections in the interior of the rejection region. We also prove $\maxlfdr$ control for an adaptive method that estimates $\pi_0$ from the data in the same way as the procedure of \citet{storey2004strong}.

In Section~\ref{sec-regret}, we investigate our method's asymptotic performance relative to the oracle procedure $\cR^*$. Extending asymptotic results for the Grenander estimator, we show that our method's attained lfdr threshold, $\lfdr(\tau_\tol)$, concentrates at a rate $m^{-1/3}$ around its expectation $\pi_0 \tol$, giving an explicit formula for its asymptotic distribution. We also show that our method's asymptotic regret relative to the oracle shrinks at the rate $m^{-2/3}$. Section~\ref{sec-simulations} illustrates our results with selected simulations, and Section~\ref{sec-discussion} concludes.

\section{Finite-sample max-lfdr control}\label{sec-control}

\subsection{Main result}

Our main result is that our procedure~$\cR_\tol$ controls the $\maxlfdr$ at exactly $\pi_0 \tol$. 

\begin{theorem}\label{thm-main} Suppose $p_1,\ldots,p_m$ follow the Bayesian two-groups model~\eqref{eq-two-groups}, with uniform null density $f_0 = 1_{[0,1]}$. For the procedure defined in \eqref{eq-def-procedure} with $\tol \leq 1$, we have
\begin{align}\label{eq-lfdr-control}
\E\left[\lfdr\left(p_{(R_{\tol})}\right) \cdot 1\{R_{\tol} > 0\} \right] = \P\left\{H_{(R_{\tol})} = 0, R_{\tol} > 0\right\} = \pi_0\tol.
\end{align}
Furthermore, if the alternative density $f_1$ is non-increasing, then we have 
\[
\maxlfdr(\cR_{\tol}) = \pi_0\tol.
\]
\end{theorem}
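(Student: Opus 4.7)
The plan is to reduce the key identity~\eqref{eq-lfdr-control} to a single deterministic Lebesgue-measure computation, and then deduce the $\maxlfdr$ statement from the monotonicity of $\lfdr$ under the shape constraint. Since the argmin in~\eqref{eq-def-procedure} is almost surely unique, the events $\{p_i = \tau_q\}$ are disjoint on $\{R_q > 0\}$, and by exchangeability
\[
\P\{H_{(R_q)} = 0, R_q > 0\} \;=\; m\,\P\{p_1 = \tau_q, H_1 = 0\}.
\]
Conditional on $H_1 = 0$, the two-groups model~\eqref{eq-two-groups} makes $p_1$ uniform on $[0,1]$ independent of $p_{-1} := (p_2, \ldots, p_m)$, so
\[
\P\{p_1 = \tau_q, H_1 = 0\} \;=\; \pi_0\,\E_{p_{-1}}\!\left[L(p_{-1})\right],\quad L(p_{-1}) := \int_0^1 1\{\tau_q(t, p_{-1}) = t\}\,dt.
\]
The left-hand equality in~\eqref{eq-lfdr-control} follows by the tower property, since $R_q$ is a function of the $p$-values and, given them, $H_{(R_q)}$ is Bernoulli with mean $1 - \lfdr(p_{(R_q)})$.

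The heart of the proof is to show that $L(p_{-1}) = q/m$ for almost every $p_{-1}$, which will make the calculation collapse to $\pi_0 q$. I would introduce the value function
\[
V(t) \;:=\; \min_{j=0,1,\ldots,m}\bigl[\,p_{(j)}(t)-qj/m\,\bigr],
\]
where $p_{(j)}(t)$ denotes the $j$th order statistic of $(t,p_2,\ldots,p_m)$ with the convention $p_{(0)}(t) \equiv 0$. Each $p_{(j)}(t)$ is continuous and non-decreasing in $t$---the apparent swap when $t$ crosses an order statistic $s_k$ of $p_{-1}$ is in fact continuous because the two neighboring order statistics meet at $s_k$---so $V$ is continuous and non-decreasing on $[0,1]$. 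Within any gap $t \in (s_{k-1}, s_k)$, only the term $p_{(k)}(t) - qk/m = t - qk/m$ varies (with slope one) while the remaining terms are constant in $t$; thus $V$ is piecewise linear with slopes in $\{0,1\}$, and the slope-one pieces coincide exactly with the intervals on which the argmin is attained at the position of $t$, i.e., on which $\tau_q(t,p_{-1}) = t$. Consequently $V(1) - V(0) = \int_0^1 V'(t)\,dt = L(p_{-1})$.

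Computing the endpoints directly, with $\mu_j := s_j - qj/m$ and $s_0 := 0$, inserting $t = 0$ shifts every positive-index term down by $q/m$ and gives $V(0) = \min_{0 \le j \le m-1}\mu_j - q/m$, while inserting $t = 1$ leaves the first $m$ positions as $\mu_0,\ldots,\mu_{m-1}$ and places $1 - q \ge 0$ at the top (using $q \le 1$), giving $V(1) = \min_{0 \le j \le m-1}\mu_j$. Subtracting yields $V(1) - V(0) = q/m$, so $L(p_{-1}) \equiv q/m$ deterministically and $\P\{H_{(R_q)} = 0, R_q > 0\} = \pi_0 q$. Finally, if $f_1$ is non-increasing then $\lfdr(t) = \pi_0/f(t)$ is non-decreasing, so on $\{R_q > 0\}$ the maximum of $\lfdr$ over $\cR_q$ is achieved at $p_{(R_q)}$, yielding $\maxlfdr(\cR_q) = \pi_0 q$.

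The main technical hurdle is extracting the piecewise-linear $\{0,1\}$-slope structure of $V$ cleanly: verifying continuity of each $p_{(j)}(t)$ across gap boundaries so that $V(1) - V(0)$ telescopes to $\int V'$, and noticing that the hypothesis $q \le 1$ is precisely what ensures $t = 1$ cannot become the argmin, which would otherwise break the endpoint identity $V(1) = \min_j \mu_j$.
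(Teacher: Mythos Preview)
Your proposal is correct and follows essentially the same route as the paper. Both proofs reduce to the same conditioning-plus-exchangeability argument, and your value function $V$ is exactly the paper's $\varphi$; the only cosmetic difference is that you compute $V(0)$ and $V(1)$ directly in terms of $\mu_j = s_j - qj/m$, whereas the paper evaluates $\varphi(u)-\varphi(0)$ for $u>\max\{q,1\}$ via a shift argument and then invokes monotonicity.
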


The familiar optional-stopping arguments from the $\FDR$ control literature, introduced by \citet{storey2004strong}, do not seem to apply to our procedure, since the minimizer~$R_{\tol}$ of the sequence~$p_{(k)} - \tol k/m$ for $k=0,\ldots,m$ is not a stopping time \edit{in the usual filtration}. We instead prove Theorem~\ref{thm-main} via a conditioning argument, whose crux is showing that each null $p$-value has exactly an $\tol/m$ chance of being the last rejection $p_{(R_{\tol})}$:

\begin{lemma}\label{lem-last-rejection}
    Fix $p_1,\ldots,p_{m-1} \in [0,1]$ and let $p_m \sim \textnormal{Unif}(0,1)$. Then we have
    \[
    \P\{p_{(R_\tol)} = p_m\} \leq \tol/m,
    \]
    with equality if $\tol \leq 1$.
\end{lemma}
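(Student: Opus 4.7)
The plan is to view $\P\{p_{(R_q)} = p_m\}$ as the Lebesgue measure of the set of $t \in [0,1]$ for which $p_m = t$ is the ``last rejection,'' and to evaluate this measure by applying the fundamental theorem of calculus to the optimal-value function
\[
M(t) \;\coloneqq\; \min_{k = 0,\ldots,m} \bigl(p_{(k)}(t) - qk/m\bigr),
\]
where $p_{(k)}(t)$ is the $k$-th order statistic when $p_m = t$. Let primes denote order statistics of the fixed values $p_1,\ldots,p_{m-1}$, write $b_l \coloneqq p_{(l)}' - ql/m$ for $l = 0,\ldots,m-1$, and let $k(t)$ denote the rank of $p_m$. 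On the interval $I_k = [p_{(k-1)}', p_{(k)}']$, the sequence being minimized is
\[
\bigl(b_0,\,\ldots,\,b_{k-1},\;t - qk/m,\;b_k - q/m,\,\ldots,\,b_{m-1} - q/m\bigr),
\]
consisting of one ``diagonal'' entry $t - qk/m$ (which is linear in $t$ with slope~$1$) together with $m$ constants.

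The structural crux of the plan is to verify the following: (i) at each rank transition $t = p_{(k)}'$, the diagonal entry $t - qk/m$ equals $p_{(k)}' - qk/m$ from the left side and $p_{(k)}' - q(k+1)/m$ from the right, while the value that was at position $k+1$ (resp.\ $k$) flips to position $k$ (resp.\ $k+1$); the multiset of minimands is therefore invariant across the transition, so $M$ is continuous; (ii) between transitions, $M$ is the pointwise minimum of a slope-$1$ line and a constant, hence piecewise linear with slopes in $\{0,1\}$; and (iii) $M'(t) = 1$ holds precisely when the argmin is attained at $p_m$'s position, i.e., precisely on the event $p_{(R_q)} = p_m$. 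Together these give
\[
\P\{p_{(R_q)} = p_m\} \;=\; \int_0^1 \mathbf{1}\{R_q(t) = k(t)\}\,dt \;=\; \int_0^1 M'(t)\,dt \;=\; M(1) - M(0).
\]

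Finally, I would evaluate the endpoints directly. At $t = 0$, the minimands are $(0,\,b_0 - q/m,\,b_1 - q/m,\,\ldots,\,b_{m-1} - q/m)$, and since $b^* \coloneqq \min_{0 \le l \le m-1} b_l \le b_0 = 0$ we get $M(0) = b^* - q/m$. At $t = 1$, the minimands are $(b_0,\,b_1,\,\ldots,\,b_{m-1},\,1 - q)$, so $M(1) = \min(b^*, 1-q)$. Combining,
\[
M(1) - M(0) \;=\; \tfrac{q}{m} - \max\bigl(0,\, q + b^* - 1\bigr) \;\le\; \tfrac{q}{m},
\]
with equality whenever $q + b^* \le 1$; in particular, whenever $q \le 1$, since $b^* \le 0$.

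The main obstacle is verifying the structural claim that $M$ is continuous with piecewise-constant-or-unit slope, and specifically the ``label-swap'' continuity at rank transitions, where the formula for the diagonal entry appears to jump by $q/m$ but the overall multiset of minimands does not. Once that structural fact is in hand, the remainder of the proof is a routine boundary calculation.
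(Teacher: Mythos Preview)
Your proposal is correct and follows essentially the same approach as the paper: both define the optimal-value function $M(t)=\varphi(p_m)=\min_k\bigl(p_{(k)}-qk/m\bigr)$, observe that it is continuous with $M'(t)\in\{0,1\}$ and $M'(t)=1$ exactly on the event $\{p_{(R_q)}=p_m\}$, and then apply the fundamental theorem of calculus. The only cosmetic differences are that the paper justifies continuity and monotonicity of $M$ more tersely (via continuity and monotonicity of the order statistics as functions of $p_m$), and evaluates $M(1)-M(0)$ by extending $p_m$ to $u>\max\{q,1\}$ and comparing the rejection index to that at $p_m=0$, rather than by your direct computation of $M(0)=b^*-q/m$ and $M(1)=\min(b^*,1-q)$.
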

Given Lemma~\ref{lem-last-rejection}, the proof of Theorem~\ref{thm-main} is straightforward:

\begin{proof}[Proof of Theorem~\ref{thm-main}] The first equality in~\eqref{eq-lfdr-control} follows from conditioning on~$\{p_i\}_{i=1}^m$, since
\begin{align*}
\P\left\{H_{(R_{\tol})} = 0, R_{\tol} > 0\mid p_1,\ldots,p_m\right\} 
&\;=\; \sum_{i=1}^m \P\left\{H_i = 0\mid p_1,\ldots,p_m\right\}1\{p_i = p_{(R_\tol)}\}\\[5pt]
&\;=\; \lfdr(p_{(R_{\tol})})\cdot 1\{R_{\tol} > 0\}.
\end{align*}
Next, because the $(H_i, p_i)$ pairs are independent and identically distributed, we can decompose the probability in \eqref{eq-lfdr-control} as
\begin{align*}
\P\left\{H_{(R_{\tol})} = 0, R_{\tol} > 0\right\} 
&\;=\; \sum_{i=1}^m \P\left\{H_i = 0, p_{(R_{\tol})} = p_i\right\}\\[5pt]
&\;=\; m \P\left\{H_m = 0, p_{(R_{\tol})} = p_m\right\}\\[5pt]
&\;=\; \pi_0 m \P\left\{p_{(R_{\tol})} = p_m \mid H_m = 0\right\}\\[5pt]
&\;=\; \pi_0 \tol,
\end{align*}
where the last step comes from conditioning on $p_1,\ldots,p_{m-1}$ and applying Lemma~\ref{lem-last-rejection}. If $f_1(t)$ is non-increasing, then $\lfdr(t)$ is non-decreasing, so that $\max_{i \in \cR_{\tol}} \lfdr(p_i) = \lfdr(p_{(R_\tol)})$ almost surely, completing the argument.
\end{proof}

We now turn to proving Lemma~\ref{lem-last-rejection}. Because $p_m$ is uniform, the probability statement is equivalent to a showing that, for any fixed $p_1,\ldots,p_{m-1} \in [0,1]$, the set of ``winning values'' $p_m \in [0,1]$, for which $\tau_{\tol}(p_1,\ldots,p_m) = p_m$, has Lebesgue measure $\tol/m$. 

\begin{proof}[Proof of Lemma~\ref{lem-last-rejection}] As we hold $p^{-m} = (p_1,\ldots,p_{m-1})$ fixed and vary~$p_m\in [0,\infty)$, define the attained minimum of the loss estimator as
\[
\varphi(p_m) \coloneqq \min_{k=0,\ldots,m}p_{(k)}-\tol\frac{k}{m}.
\]
\edit{For $k=1,\ldots,m$, let $\varphi_k(p_m) = p_{(k)}-\tol\frac{k}{m}$, and $\varphi_0 \equiv 0$, so that $\varphi(p_m) = \min_{k=0,\ldots,m} \varphi_k(p_m)$. Each $\varphi_k$ is continuous, non-decreasing, and piecewise linear with at most three pieces. Specifically, for $k=1,\ldots,m-1$, $\varphi_k$ has slope 1 on the open interval $p_m \in (p_{(k-1)}^{-m}, p_{(k)}^{-m})$ where $p_m = p_{(k)}$, and slope 0 elsewhere (if $p_{(k-1)}^{-m} = p_{(k)}^{-m}$ then $\varphi_k$ is constant). For $k=m$, $\varphi_k$ has slope 1 on the open interval $(p_{(m-1)}^{-m}, \infty)$ and slope 0 elsewhere. As a result, $\varphi(p_m)$ is also continuous, non-decreasing, and piecewise linear, with finitely many knots. Between its knots, the function's slope is 1 on the intervals where the minimizing function is increasing, and slope 0 everywhere else. Furthermore, because the minimizing function is $\varphi_{R_{\tol}}$, the region where $\varphi(p_m)$ is differentiable and $p_m = p_{(R_{\tol})}$ is exactly the region where $\varphi'(p_m) = 1$.}

By the fundamental theorem of calculus,
\begin{align*}
    \P\{p_{(R_\tol)} = p_m\} 
    &=\int_0^1 1\{p_{(R_\tol)} = p_m\} \textnormal{d}p_m \\
    &=\int_0^1 \varphi'(p_m) \textnormal{d}p_m \\
    &= \varphi(1)-\varphi(0).
\end{align*}
It remains only to evaluate $\varphi(1) - \varphi(0)$. \edit{To complete the argument informally, note that moving $p_m$ from $0$ to $1$ shifts all order statistics by one index. If $\tol < 1$ this results in an identical rejection threshold but with one fewer rejection, increasing $\varphi$ by exactly $\tol/m$. If ${\tol \ge 1}$, however, this argument is not quite correct because the rejection threshold could possibly increase to $p_m=1$.}

More formally, let $p_{(k)}(u)$ and $R_{\tol}(u)$ represent the order statistics and number of rejections when we set $p_m = u \in [0,\infty)$. For any $u > \max\{\tol,1\}$, we have $p_{(k-1)}(u) = p_{(k)}(0)$ for all $k < m$, and $p_{(m)}(u) = u > \tol\frac{m}{m}$. As a direct result, we have $R_{\tol}(u) = R_{\tol}(0) - 1$ and $p_{(R_{\tol}(u))}(u) = p_{(R_{\tol}(0))}(0)$, so we have $\varphi(u) - \varphi(0) = \tol/m$. By the continuity and monotonicity of $\varphi$, we also have
\[
\varphi(1) - \varphi(0) \leq \varphi\left(\max\{\tol,1\}\right) - \varphi(0) = \tol/m.
\]
In particular, we have equality if $\tol \leq 1$, completing the proof.
\end{proof}

\begin{remark}[Extending Theorem~\ref{thm-main} to more general null densities~$f_0$]
 Because the set of ``winning values'' in Lemma~\ref{lem-last-rejection} is a subset of $[0,\tol]$ with Lebesgue measure $\tol/m$, we can trivially extend the result to conclude $\P\{p_{(R_{\tol})} = p_{m}\} \leq \tol/m$, if $p_m$ is drawn from any density $f_0$ with $f_0(t) \leq 1$ for all $t\in [0,\tol]$. Likewise, we can extend Theorem~\ref{thm-main} to show that $\maxlfdr(\cR_{\tol}) \leq \pi_0\tol$ with a more general null density $f_0$, as long as $\lfdr(t)$ is non-decreasing and $f_0(t) \leq 1$ for all $t\in [0,\tol]$.
\end{remark}

\subsection{Estimating \texorpdfstring{$\pi_0$}{the null proportion}}\label{sec-estimating-pi0}

Theorem~\ref{thm-main} parallels the exact $\FDR$ guarantee~$\FDR(\cR^{\BH}_\q) = \pi_0\q$ for the BH procedure. If we bound $\pi_0 \leq 1$, we can run our method at level $\tol=\alpha$ and ensure that we conservatively control $\maxlfdr$ at $\pi_0\alpha$, but our method will be overly conservative. 
\edit{In this section, we consider modifications of our procedure analogous to the modifications of the BH procedure proposed by \citet{storey2004strong}. First, we use their estimator of the null proportion $\pi_0$, defined as
\begin{align}\label{eq-def-storey}
    \hat{\pi}_0^\lambda 
\coloneqq \frac{1+\#\{i : p_i > \lambda\}}{(1-\lambda)m},
\end{align}
modifying an estimator originally proposed by \citet{schweder1982plots} and later by \citet{storey2002direct}. Next, we constrain the procedure to minimize over order statistics~$p_{(k)}$ that are less than~$\lambda$---see~\eqref{eq-def-procedure-modification} for the modified SL procedure.}

Our next result shows that plugging in $\hat\pi_0^\lambda$ and running a modification of our procedure at level $\hat\tol=\alpha/\hat\pi_0^\lambda$ controls $\maxlfdr$ at level $\alpha$ in finite samples:

\begin{theorem}\label{thm-main-modification} Suppose $p_1,\ldots,p_m$ follow the Bayesian two-groups model~\eqref{eq-two-groups}, with $f_0 = 1_{[0,1]}$ and~$f_1$ non-increasing. Fix~$\lambda\in(0, 1)$, and define a modified version of our SL procedure that only examines order statistics below $\lambda$:
\begin{equation}\label{eq-def-procedure-modification}
    R_{\tol}^{\lambda} \coloneqq \argmin_{k\geq 0: \;p_{(k)}\le \lambda}\,\hat\pi_0^\lambda p_{(k)} - \frac{\tol k}{m},
\end{equation}
and $\cR_{\tol}^\lambda = \{i:\; p_i \leq p_{(R_{\tol}^{\lambda})}\}$. Then we have 
\[
\maxlfdr\left(\cR_{\tol}^{\lambda}\right) 
\;\leq\; \tol.
\]
\end{theorem}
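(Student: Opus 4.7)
My plan is to follow the structure of the proof of Theorem~\ref{thm-main}, inserting one extra integration step to handle the randomness of the Storey estimator. Because $f_1$ is still non-increasing, $\maxlfdr(\cR_q^\zeta)$ equals $\P\{H_{(R_q^\zeta)} = 0,\, R_q^\zeta > 0\}$, and the same exchangeability argument as in Theorem~\ref{thm-main} reduces the task to showing
\[
\P\{p_{(R_q^\zeta)} = p_m \mid H_m = 0\} \;\le\; \frac{q}{m\pi_0}.
\]

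Next, I would condition on $p_1,\dots,p_{m-1}$, under which $p_m$ is uniform on $[0,1]$. Realizations with $p_m > \zeta$ contribute nothing, since the procedure only examines order statistics below $\zeta$. For $p_m \in [0, \zeta]$, the estimator $\hat\pi_0^\zeta$ is constant in $p_m$, equal to $a \coloneqq (m - \tilde M)/((1-\zeta)m)$, where $\tilde M \coloneqq \#\{i \le m-1 : p_i \le \zeta\}$. On this range the procedure coincides with a truncated SL procedure at the data-dependent level $\tilde q \coloneqq q/a$, i.e.\ it minimizes $p_{(k)} - \tilde q k/m$ over $\{k : p_{(k)} \le \zeta\}$. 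Setting $g(u) \coloneqq \min_{k : p_{(k)} \le \zeta}\,p_{(k)} - \tilde q k/m$ at $p_m = u$, I would replay the FTC argument from the proof of Lemma~\ref{lem-last-rejection} to conclude that $g$ is continuous and non-decreasing on $[0,\zeta]$ with $g'(u) = 1\{p_m = p_{(R_q^\zeta)}\}$ almost everywhere, so the winning set has Lebesgue measure $g(\zeta) - g(0)$. The main technical step, which I expect to be the principal obstacle, is the bound $g(\zeta) - g(0) \le \tilde q/m$. Writing both endpoints in terms of the auxiliary sequence $\tilde a_k \coloneqq \tilde p_{(k)} - \tilde q k/m$ (where $\tilde p_{(1)} \le \cdots \le \tilde p_{(\tilde M)}$ are the $p_i \le \zeta$ with $i \le m-1$, and $\tilde p_{(0)} \coloneqq 0$), a short direct computation gives $g(0) = -\tilde q/m + \min_k \tilde a_k$ and $g(\zeta) = \min\{\min_k \tilde a_k,\, \zeta - (\tilde M + 1)\tilde q/m\}$, from which the inequality is immediate.

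Finally, I would integrate this bound over $p_1,\dots,p_{m-1}$. Writing $Y \coloneqq (m-1) - \tilde M$, we have $Y \sim \textnormal{Bin}(m-1, p)$ with $p \coloneqq \P\{p_i > \zeta\} = 1 - F(\zeta)$, and $m - \tilde M = Y + 1$, so $\tilde q/m = q(1-\zeta)/(Y+1)$. Applying the standard identity $\E[1/(Y+1)] = (1-(1-p)^m)/(mp) \le 1/(mp)$ yields
\[
\maxlfdr(\cR_q^\zeta) \;\le\; m\pi_0 \cdot q(1-\zeta) \cdot \E\!\left[\frac{1}{Y+1}\right] \;\le\; \frac{\pi_0(1-\zeta)}{p}\, q.
\]
Since $f_0 \equiv 1$ and $f_1 \ge 0$, $p = \pi_0(1-\zeta) + (1-\pi_0)(1 - F_1(\zeta)) \ge \pi_0(1-\zeta)$, so the prefactor is at most $1$ and we conclude $\maxlfdr(\cR_q^\zeta) \le q$. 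The cleanliness of the final estimate hinges on two matching cancellations: the factor of $m$ from exchangeability against the $m$ in $\E[1/(Y+1)]$, and the factor $1-\zeta$ from $\tilde q = q/a$ against the same factor in the lower bound on $p$.
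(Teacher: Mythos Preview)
Your proposal is correct and follows essentially the same route as the paper's proof: the exchangeability reduction, the observation that $\hat\pi_0^\zeta$ is constant for $p_m\in[0,\zeta]$, the Lemma~\ref{lem-last-rejection} winning-set argument, the binomial identity $\E[1/(1+Y)]=(1-(1-p)^m)/(mp)$, and the final bound $1-F(\zeta)\ge\pi_0(1-\zeta)$ all match. The only cosmetic difference is that the paper rescales the $p$-values below $\zeta$ by $1/\zeta$ and applies Lemma~\ref{lem-last-rejection} as a black box to an SL problem with $m^\zeta$ hypotheses at level $q^\zeta=\frac{q m^\zeta}{\zeta\hat\pi_0^\zeta m}$, whereas you replay the FTC computation directly on $[0,\zeta]$ and evaluate the endpoints $g(0)$ and $g(\zeta)$ by hand; the resulting bound $q/(\zeta\hat\pi_0^\zeta m)\cdot 1_A$ (paper) and $\tilde q/m$ (yours) coincide after accounting for the factor $\P(A)=\pi_0\zeta$.
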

The proof of Theorem~\ref{thm-main-modification} is deferred to the Appendix. The method $\cR_\alpha^\lambda$ coincides with $\cR_{\alpha/\hat\pi_0^\lambda}$, our original procedure applied at the corrected level $\hat{\tol}=\alpha/\hat\pi_0^\lambda$, whenever $\tau_{\hat{\tol}} \leq \lambda$. Since we usually have $\tau_{\hat{\tol}} \ll 0.5 \leq \lambda$, the two methods are identical for all practical purposes.

In the next section, we will investigate the asymptotic regret of methods that estimate $\pi_0$. In particular, we will show that this estimation error is asymptotically negligible if it shrinks at a faster rate than $m^{-1/3}$. We can indeed achieve this with $\hat\pi_0^\lambda$ if $f_1$ has two continuous derivatives in a neighborhood of $1$, with $f_1'(1) = f_1(1) = 0$. By Taylor's theorem, we have
\[
1-F(\lambda) \;=\; (1-\lambda)\pi_0 + \frac{(1-\pi_0)f_1''(\xi)}{6} (1-\lambda)^3,
\]
for some $\xi \in [\lambda,1]$. Assuming $\pi_0\in (0,1)$ and taking $\lambda = 1-m^{-1/5}$, we then have
\begin{equation}\label{eq-pi0-asy}
m^{2/5}\left(\hat{\pi}_0^\lambda - \pi_0\right)
\;\sim\; m^{2/5}\left(\frac{1+\text{Binom}\left(m, 1-F(\lambda)\right)}{(1-\lambda)m} - \pi_0\right)\\[5pt]
\;\stackrel{d}{\to}\; \mathcal{N}\left(\frac{(1-\pi_0)f_1''(1)}{6}, \pi_0\right),
\end{equation}
with subgaussian errors for finite $m$, so the results in Section~\ref{sec-ao} generally apply. \edit{Here we use the asymptotic approximation for $\epsilon \to 0$, $m\epsilon \to \infty$:
\begin{equation}
(m\epsilon)^{-1/2}\left(\text{Binom}(m,\epsilon) - m\epsilon\right) \;\stackrel{d}{\to}\; \mathcal{N}\left(0, 1\right),
\end{equation}
for $\epsilon = 1-F(\lambda)$, and we apply Slutsky's theorem.}

\edit{Other estimators of the null proportion~$\pi_0$ may be more robust to violations of the independence assumption---see, e.g., \citet{benjamini2006adaptive} for a two-stage BH procedure.} See also \citet{genovese2004stochastic} and \citet{patra2016estimation} for a discussion of estimators for~$\pi_0$.

\section{Asymptotic regret analysis}\label{sec-regret}

In this section, we study our procedure's empirical Bayes regret under the weighted classification risk $\E\, \left[L_\omega(H, \cR)\right]$, where the expectation is taken over $H_1,\ldots,H_m$ and $p_1,\ldots,p_m$ according to~\eqref{eq-two-groups}, and $L_\omega$ is defined as in~\eqref{eq-loss}. Throughout this section we will be considering a sequence of problems with $m\to \infty$.

A fundamental result of~\citet{sun2007oracle} is that the oracle~\eqref{eq-def-oracle} minimizes the weighted classification risk over all procedures, thus representing a benchmark against which we can compare methods that are feasible without {\em a priori} \edit{knowledge of the} $\lfdr$. In the empirical Bayes literature~\citep[see, e.g.,][]{efron2019bayes}, the price of our ignorance of the model parameters is measured by the {\sl regret}, or excess risk, given by the optimality gap
\begin{align}\label{eq-def-regret}
\Regret_m(\cR) \coloneqq \E\left[L_\omega(H, \cR) - L_\omega\left(H, \cR^*\right)\right].
\end{align}

\subsection{Population regret}\label{sec-popn-regret}

Before tackling the more delicate problem of calculating the regret for procedures with data-dependent $p$-value rejection thresholds, we first investigate the regret of fixed-threshold methods. For $t\in [0,1]$, let $\cR_{t}^{\Fix} \coloneqq \{i:\; p_i \leq t\}$, and note that the oracle method is $\cR^* = \cR_{\tau^*}^\Fix$ where~$\tau^*$ is the oracle threshold~\eqref{eq-oracle-threshold}. We introduce the function $\rho(t)$ to represent the regret of this method, which is free of $m$:
\begin{equation}\label{eq-regret-fixed-t}
    \rho(t) \;\coloneqq\; \Regret_m(\cR_t^{\Fix})
    \;=\; F(\tau^*) - F(t) - \frac{\pi_0}{\alpha}(\tau^* - t).
\end{equation}
If $\lfdr(\tau^*) = \alpha$, then we also have $f(\tau^*) = \pi_0/\alpha$, and $\rho(t)$ is simply the error of the first-order Taylor expansion of $F$ around $\tau^*$, also known as the Bregman divergence associated with~$-F$. If $f$ is continuously differentiable between $t$ and $\tau^*$, then
\begin{align}\label{eq-regret-fixed-t-taylor}
\rho(t)
\;=\; \frac{-f'(\xi_t)}{2} \left(t-\tau^*\right)^2, \quad\text{ for some }  \xi_t \text{ between } t \text{ and } \tau^*.
\end{align}
Since~$F$ is concave, $\rho(t) \geq 0$. Finally, we can also rewrite \eqref{eq-regret-fixed-t} as an integral
\begin{align}
    \rho(t)
    \;=\; \int_{t}^{\tau^*} \left(1-\alpha^{-1}\lfdr(\tau)\right)\text{d}F(\tau).
\end{align}
This form for the regret underscores the relationship between the~$\lfdr$ and the regret, and will prove useful for analyzing the regret with data-dependent thresholds.

We can evaluate $\rho$ to investigate the regret of population versions of our procedure and the BH procedure, i.e. versions of the procedures with rejection thresholds chosen using the true cdf~$F$ in place of the empirical cdf~$F_m$. The population BH threshold at an arbitrary level~$\q\in (0,1)$ is found by intersecting~$F$ with the ray of slope~$\q^{-1}$, i.e.
\[
t^{\BH\textnormal{-POP}}_\q \coloneqq \sup\left\{ t\in[0,1] :\;  F(t) - t/\q = 0\right\}.
\] 
By comparison, the population version of our procedure~$\tau_\tol$ is
\[
t_{\tol} \coloneqq \sup\left\{ t\in[0,1]:\; \edit{t\in \argmax_{s\in [0,1]}\left\{F(s) - s/\tol\right\}}\right\},
\]
which coincides with the oracle threshold $\tau^*$ when $\tol=\alpha/\pi_0$. Note that $t_\tol$ is equivalent to the population BH threshold~$t^{\BH\textnormal{-POP}}_{\q^*(\tol)}$ at the lower level
\begin{align}\label{eq-bh-equivalence}
\q^*(\tol) \coloneqq \frac{t_{\tol}}{F(t_{\tol})}.    
\end{align}
Thus, there is always {\em some} value $\q^*(\tol)$ for which the BH procedure approximately reproduces the oracle, namely $\q^*(\alpha/\pi_0) = t_{\alpha/\pi_0}/F(t_{\alpha/\pi_0})$, but generally we cannot use it unless we know $f_1$ and $\pi_0$.

To illustrate the population regret in a concrete example, we consider a parametric alternative distribution
\[
f_1(t; \theta) \coloneqq \theta t^{\theta-1} \qquad \textnormal{for some } \theta\in(0,1),
\]
which is a $\textnormal{Beta}(\theta,1)$ density. This form is called a {\sl Lehmann alternative} in the multiple testing literature \citep[see, e.g.,][]{pounds2003estimating}. In this case, the population procedures at level~$\tol\in (0,1)$ use rejection thresholds
\begin{align*}
    t_{\tol}
    \;=\; \left(\frac{\tol^{-1} - \pi_0}{(1-\pi_0)\theta}\right)^{-\frac{1}{1-\theta}}, \quad\textnormal{ and} \quad
    t^{\BH\textnormal{-POP}}_\q
    \;=\; \left(\frac{\q^{-1}-\pi_0}{1-\pi_0}\right)^{-\frac{1}{1-\theta}}.
\end{align*}
Furthermore, the threshold equivalence~\eqref{eq-bh-equivalence} gives 
\[
\q^*(\tol)
\;=\; \frac{\theta \tol}{1 - (1-\theta)\pi_0 \tol} \;\approx\; \theta \tol,
\]
where the approximation holds for small values of $\tol$. Thus, the correspondence between $\tol$ and $\q^*(\tol)$ depends on the parameter $\theta$, which controls the signal strength under the alternative. For small values of $\theta$, the signal is very strong, and the ``correct'' choice of $\q$ is much smaller than the desired $\maxlfdr$ level $\alpha$, but for weaker signals (larger $\theta$), we should choose $\q$ closer to $\alpha$. \edit{Intuitively, when the signal is very strong, we expect the average rejection of the optimal procedure to be much more promising than rejections near the optimal threshold~$\tau^*$; correspondingly, the optimal procedure's FDR is much lower than~$\alpha$.} Without knowing the signal strength in advance, it is difficult to know at what values of $\q$ the BH method will perform well.

\begin{figure}[t!]
    \centering
	\centerline{
	\includegraphics{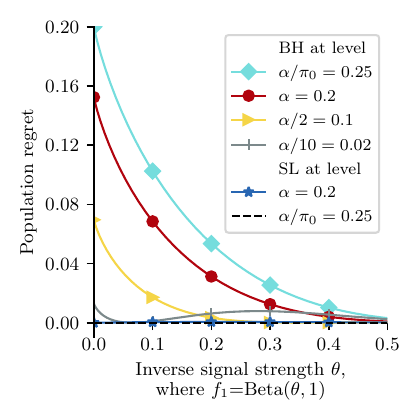}
	\includegraphics{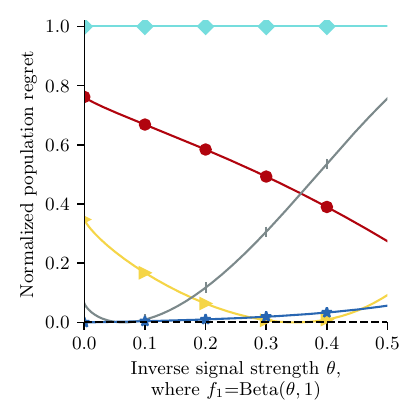}
	}
	\caption{Left: The fixed-threshold regret~$\rho(t)$~\eqref{eq-regret-fixed-t} with Beta alternatives~$f_1(s)= \theta s^{\theta-1}$ as a function of~$\theta\in [0, .5]$. Right:~a normalized version~$\rho(t)/\rho(0)$, such that BH at level~$\q = \alpha/\pi_0$ has unit normalized regret, identical to the regret of the procedure that rejects nothing. The null proportion is~$\pi_0 = 0.8$ and the cost ratio is~$\omega = 4$.}\label{fig-pop-regret}
\end{figure}

In Figure~\ref{fig-pop-regret} we plot the population regret for various choices of the level of the procedure, setting $\pi_0=0.8$ and $\omega = 4$ and varying the parameter $\theta$. The population version of our procedure at level~$\tol=\frac{\alpha}{\pi_0}$ with $\alpha= \frac{1}{1+\omega} = 0.2$ is the oracle~\eqref{eq-def-oracle}, so it achieves zero regret, while the conservative version of our procedure with $\tol=\alpha$ performs quite well for all values of the alternative parameter~$\theta$. In this example, the asymptotic error incurred from conservatively bounding~$\pi_0$ by one in the procedure is small compared to the error incurred by using $\BH(\q)$ at an {\em ad hoc} value. The BH procedure at level~$\frac{\alpha}{\pi_0}$ or~$\alpha$ incurs substantial asymptotic regret by comparison. In particular, note that the $\BH(\alpha/\pi_0)$ procedure incurs the same asymptotic regret as the procedure that rejects nothing; i.e. $\rho(t_{\alpha/\pi_0}^{\BH\text{-POP}}) = \rho(0)$. If we run BH at a lower level like $\alpha/2$, $\alpha/10$, or $\alpha/100$, we can do well for some range of $\theta$ values, but struggle at other parts of the parameter space. No single level for BH dominates in terms of regret, so for the classification risk it is more appropriate to view the BH level as a tuning parameter \edit{than as a proxy for the true~$\lfdr$ threshold~$\alpha$} \citep{neuvial2012false}. 

\subsection{Relationship of our method to the Grenander estimator}\label{sec-grenander}

Since the marginal density $f$ appears in the denominator of the lfdr, bounding $\pi_0\leq 1$ and plugging in Grenander's estimator $\hat{f}_m$ (defined in (\ref{eq-gren})) gives the conservative estimate
\begin{align*}
    \widehat{\lfdr}(t) \coloneqq \frac{1}{\hat{f}_m(t)}, \hspace{1em} t \in [0,1].
\end{align*}
Similar to how the BH procedure chooses an interval $[0,t]$ as large as possible subject to a constraint on an estimate of the FDP, the rejection threshold of the SL procedure can be equivalently expressed as
\begin{align}
\label{argmax-characterization}
    \tau_{\tol}=\underset{p_{(0)},\dots,p_{(m)}}{\operatorname{argmax}} \hspace{.4em} \left\{ \frac{\tol k}{m} - p_{(k)} \right\} = \sup \left\{t \in [0,1] : \widehat{\lfdr}(t) \leq  \tol\right\},
\end{align}
taking the convention that $\sup \emptyset \equiv 0$. The equivalence in~\eqref{argmax-characterization} is illustrated in Figure \ref{fig:Zk_gcm}. Let $\hat{F}_m$ denote the least concave majorant of the empirical cdf $F_m$, plotted as a dashed blue line in the left panel of Figure \ref{fig:Zk_gcm}. By definition of $\widehat{\lfdr}(t)$, the supremum on the right hand side of~\eqref{argmax-characterization} is equal to the largest $t$ for which $\partial_-(\tol\hat{F}_m(t)-t) = \tol\hat{f}_m(t)-1 \geq 0$ (where $\partial_-$ denotes the left derivative), which corresponds to the maximizer of the function $\tol\hat{F}_m(t)-t$, illustrated for example in the right panel of Figure \ref{fig:Zk_gcm}.  $\hat{F}_m \geq F_m$ implies
\begin{align*}
    \tol \hat{F}_m(t)-t \geq \tol F_m(t)-t,\hspace{1em} t\in[0,1],
\end{align*}
with equality at the knots of $\hat{F}_m$, and since the maximizer of the left hand side occurs at a knot of $\hat{F}_m$, it is also the maximizer of the right hand side, i.e. the argmax of $\frac{\tol k}{m}-p_{(k)}$. 
\begin{figure}
	\centering
	\centerline{
	\includegraphics{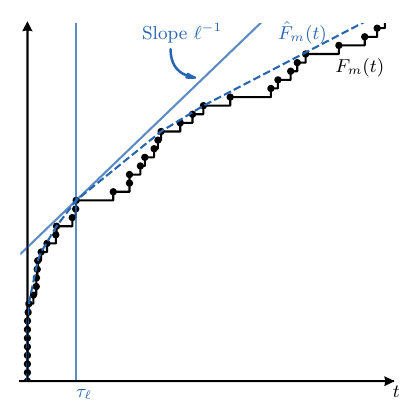}
	\includegraphics{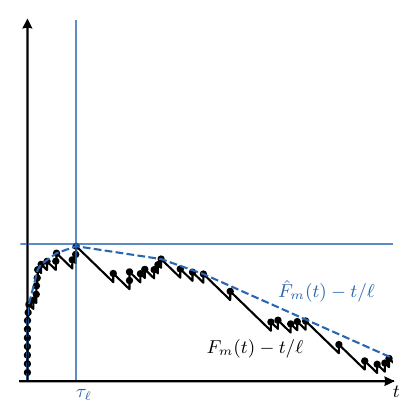}
	} 
    \caption{Left: empirical cdf~$F_m$ and its least concave majorant~$\hat{F}_m$. The support line of slope~$\tol^{-1}$ touches both curves at the decision threshold~$\tau_{\tol}$. Right: same plot with the line~$t/\tol$ subtracted off.}
  \label{fig:Zk_gcm}
\end{figure}

We can again compare this result with the $\BH(\q)$ threshold, given by
\begin{align*}
    \tau_\q^{\BH} =\underset{k=0,\dots,m}{\operatorname{max}} \hspace{.4em} \left\{p_{(k)}: \frac{\q k}{m} - p_{(k)} \geq 0 \right\} = \sup\left\{t \in [0,1] : F_m(t) \geq \q^{-1}t \right\},
\end{align*}
which is the largest $t$ for which the ray $\q^{-1} t$ lies below the ecdf $F_m(t)$. Our procedure instead finds the last intersection of the graph of $F_m$ with a support line of slope $\tol^{-1}$, since
\begin{align*}
    \widehat{\lfdr}(t) \leq \tol \iff \hat{f}_m(t) \geq \tol^{-1}.
\end{align*}
This relationship is illustrated in the left panel of Figure \ref{fig:Zk_gcm}.

\subsection{Asymptotic behavior of our procedure}\label{sec-ao}

Equation \eqref{eq-regret-fixed-t-taylor} suggests that, when $f$ is sufficiently regular near $\tau^*$, the regret is closely related to the squared error of the rejection threshold. Our main result in this section establishes cube-root asymptotics for the behavior of our procedure $\cR_{\tol}$ with $\tol=\alpha/\hat{\pi}_0$, where $\hat{\pi}_0$ consistently estimates $\pi_0$; if $\pi_0$ is known, then the results apply directly with $\hat\pi_0 = \pi_0$. 

We derive limiting distributions for the threshold $\tau_{\tol}$, the $\lfdr$ at the threshold, and the regret of $\cR_{\tol}$. All three are given in terms of Chernoff's distribution \citep{chernoff1964estimation}, which is defined as the distribution of the maximizer~$Z$ of a standard two-sided Brownian motion~$W = (W(t))_{t\in \mathbb{R}}$ with parabolic drift:
\begin{equation}\label{eq-chernoff}
Z = \argmax_{t\in \R}\, W(t) - t^2.
\end{equation}
The random variable~$Z$ has a density with respect to the Lebesgue measure on~$\R$ that is symmetric about zero. \citet{dykstra1999distribution} suggest approximating the density and cdf of $Z$ by those of~$\mathcal{N}\left(0, (.52)^2\right)$. This approximation can be somewhat crude but gives a rough sense for the distribution of~$Z$. \citet{groeneboom2001computing} provide much more accurate numerical methods to compute the density, cdf, quantiles and moments of~$Z$.

\begin{theorem}\label{thm-asymptotics}
Suppose $p_1,\ldots,p_m$ follow the Bayesian two-groups model~\eqref{eq-two-groups}, with  $\pi_0 \in (0,1)$, $f_0 = 1_{[0,1]}$, and $f_1$ non-increasing. For $\tol \in (0, \pi_0^{-1})$, assume additionally that
\begin{enumerate}[(i)]
    \item there is a unique value $t_{\tol}\in (0,1)$ for which $f(t_{\tol}) = \tol^{-1}$, 
    \item $f$ is continuously differentiable in a neighborhood of $t_{\tol}$ with $f'(t_{\tol}) < 0$, and
    \item $\hat\tol$ is any random variable with $m^{1/3}(\hat{\tol} -\tol) \stackrel{p}{\to} 0$ as $m \to \infty$.
\end{enumerate}
Then we have, as $m\to \infty$,
\begin{align}
    \label{eq-threshold-asymptotics}
    m^{1/3}(\tau_{\hat{\tol}} -t_{\tol}) 
    &\;\stackrel{d}{\to}\; \left(\frac{\tol}{4} \cdot f'(t_{\tol})^2\right)^{-1/3} Z,\qquad \text{and}\\[7pt]
   \label{eq-lfdr-asymptotics}
    m^{1/3}\cdot \frac{\lfdr(\tau_{\hat{\tol}}) - \pi_0 \tol}{\pi_0 \tol}
    &\;\stackrel{d}{\to}\; \left(4 \tol^2 \cdot |f'(t_{\tol})|\right)^{1/3} Z.
\end{align}
where $Z$ follows Chernoff's distribution defined in \eqref{eq-chernoff}. Further, suppose that
\begin{equation}\label{eq-stronger}
\P\{m^{-1/3}|\hat\tol - \tol| > \varepsilon\} = o\left(m^{-2/3}\right), \quad \text{ for all } \varepsilon > 0.
\end{equation}
Then we also have $\E \left[\tau_{\hat\tol}\right]
\;\to\; t_{\tol}$. In addition,
\begin{align}
\label{eq-var-tau}
m^{2/3}\textnormal{Var}\left(\tau_{\hat\tol}\right)
&\;\to\; \left(\frac{\tol}{4} \cdot f'(t_{\tol})^2\right)^{-2/3} \textnormal{Var}(Z),\qquad \text{and}\\[7pt]
\label{eq-var-lfdr}
m^{2/3}\textnormal{Var}\left(\frac{\lfdr(\tau_{\hat{\tol}}) - \pi_0\tol}{\pi_0\tol}\right) 
&\;\to\; \left(4\tol^2 \cdot |f'(t_{\tol})|\right)^{2/3}\textnormal{Var}(Z),
\end{align}
where $\textnormal{Var}(Z) \approx 0.26$.
\end{theorem}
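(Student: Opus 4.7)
The plan is to reduce everything to the known cube-root asymptotic theory for Grenander's estimator, exploiting the characterization $\tau_q = \sup\{t\in[0,1]:\; \hat f_m(t) \ge q^{-1}\}$ from \eqref{argmax-characterization}, and then deduce the lfdr statement and variance statements by essentially one-dimensional delta-method and uniform-integrability arguments.

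First, I would assume without loss of generality that $\hat q = q$ is deterministic and prove the weak limits in this case. The standard inversion trick of \citet{groeneboom2001computing} says that $\tau_q > t$ if and only if the greatest concave majorant $\hat F_m$ has slope at least $q^{-1}$ on $[0,t]$, and this event is in turn equivalent to $t$ being strictly below the argmax of the process $s \mapsto F_m(s) - s/q$. Thus $\tau_q$ coincides (a.s.) with the largest maximizer of $F_m(s) - s/q$ over $s\in[0,1]$. Localizing around $t_q$ via the change of variables $s = t_q + m^{-1/3} u$ yields a localized process $X_m(u) = m^{2/3}\bigl[(F_m - F)(t_q + m^{-1/3}u) - (F_m-F)(t_q)\bigr] + m^{2/3}\bigl[F(t_q+m^{-1/3}u) - F(t_q) - m^{-1/3}u/q\bigr]$, whose two summands converge, by Donsker's theorem applied to the $p$-value empirical process and by the Taylor expansion of $F$ around $t_q$ using $f(t_q) = 1/q$ and $f'(t_q)<0$, to a two-sided Brownian motion $\sigma W(u)$ with $\sigma^2 = f(t_q) = 1/q$ plus the quadratic drift $\tfrac{1}{2}f'(t_q) u^2$. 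The continuous mapping theorem for argmax (see \citet{kim1990cube}) then gives $m^{1/3}(\tau_q - t_q) \stackrel{d}{\to} \argmax_u \{\sigma W(u) + \tfrac{1}{2}f'(t_q) u^2\}$, and Brownian scaling rewrites this argmax as $c \cdot Z$ for $Z$ Chernoff-distributed, with $c = (q f'(t_q)^2/4)^{-1/3}$, matching \eqref{eq-threshold-asymptotics}.

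To handle random $\hat q$, I would couple $\tau_{\hat q}$ and $\tau_q$ using the monotonicity of $q \mapsto \tau_q$: condition~(iii) and the equicontinuity of the limit let us show $m^{1/3}(\tau_{\hat q} - \tau_q) \stackrel{p}{\to} 0$ by sandwiching $\tau_{\hat q}$ between $\tau_{q \pm \eps m^{-1/3}}$ and applying Slutsky. The lfdr statement \eqref{eq-lfdr-asymptotics} then follows by the delta method applied to $g(t) = \pi_0/f(t)$, which is $C^1$ near $t_q$ with $g'(t_q) = -\pi_0 f'(t_q)/f(t_q)^2 = -\pi_0 q^2 f'(t_q)$; combining with \eqref{eq-threshold-asymptotics}, normalizing by $\pi_0 q$, and using that $Z$ is symmetric, the constant simplifies to $(4q^2|f'(t_q)|)^{1/3}$ as claimed.

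For the variance statements \eqref{eq-var-tau}--\eqref{eq-var-lfdr}, convergence in distribution is not enough; I need uniform integrability of $m^{2/3}(\tau_{\hat q} - t_q)^2$, which is the main technical obstacle. The plan is to establish a subgaussian-type tail bound of the form
\begin{equation*}
\P\bigl\{m^{1/3}|\tau_q - t_q| > x\bigr\} \;\le\; C\exp(-c x^3),\qquad x \le m^{1/3}\delta,
\end{equation*}
via a peeling argument on the argmax characterization, bounding $\P\{\tau_q - t_q > m^{-1/3} x\}$ by the probability that the empirical process $\sqrt{m}(F_m - F)$ exceeds a parabola of height $\asymp m^{-1/2} x^2$ somewhere in a localized window, and controlling this with standard maximal inequalities for empirical processes plus Talagrand/Bernstein concentration. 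Such bounds for Grenander-type monotone estimators are available in the literature (e.g., Durot and Lopuha\"a). The tail condition \eqref{eq-stronger} then ensures the difference $\tau_{\hat q} - \tau_q$ contributes negligibly to the second moment, so uniform integrability transfers and the variances converge to $c^2 \textnormal{Var}(Z)$; the lfdr variance follows from the same delta-method computation. The hardest step is this subgaussian tail bound on the localized argmax; everything else is bookkeeping via the switching relation and the delta method.
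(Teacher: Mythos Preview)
Your proposal is correct and closely parallels the paper, but the distributional part~\eqref{eq-threshold-asymptotics} is obtained by a genuinely different (dual) route. The paper does \emph{not} invoke the Kim--Pollard argmax continuous mapping theorem on the localized process $F_m(t_q+m^{-1/3}u)-u/q$; instead it uses the switching relation $\{\tau_{\hat q}\le t\}=\{\gren(t)\le \hat q^{-1}\}$ and then appeals to a process-level result of D\"umbgen giving Skorokhod convergence of $m^{1/3}\bigl(\gren(t_q+m^{-1/3}h)-f(t_q)\bigr)$ to the slope process $\mathbb S_{a,b}$ of the least concave majorant of Brownian motion with parabolic drift. This lets the paper read off $\P\{m^{1/3}(\tau_{\hat q}-t_q)\le h\}\to\P\{\mathbb S_{a,b}(h)\le 0\}$ pointwise, and the random $\hat q$ is absorbed trivially because $m^{1/3}(\hat q^{-1}-q^{-1})\stackrel{p}{\to}0$ just shifts the threshold by $o_p(1)$. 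Your argmax route is equally standard and arguably more elementary, but it requires you to verify tightness of $m^{1/3}(\tau_q-t_q)$ separately, and your treatment of random $\hat q$ via monotone sandwiching $\tau_{q-\eps m^{-1/3}}\le\tau_{\hat q}\le\tau_{q+\eps m^{-1/3}}$ is slightly more delicate than you indicate (the perturbation $\pm\eps m^{-1/3}$ in $q$ produces an $O(\eps)$ linear drift in the localized objective, hence an $O(\eps)$ shift in the limiting argmax, and one must send $\eps\downarrow 0$ after $m\to\infty$). One terminological quibble: the convergence of the localized increment $m^{2/3}\bigl[(F_m-F)(t_q+m^{-1/3}u)-(F_m-F)(t_q)\bigr]$ to two-sided Brownian motion is not Donsker's theorem per se (that is the wrong scaling) but rather a local empirical-process result, e.g.\ via KMT or the machinery in Kim--Pollard.

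For the lfdr limit~\eqref{eq-lfdr-asymptotics} and for the moment statements~\eqref{eq-var-tau}--\eqref{eq-var-lfdr}, your plan matches the paper's: Taylor expansion/delta method for the former, and for the latter a truncation plus a peeling-based exponential tail bound to obtain uniform integrability of $m^{2/3}(\tau_{\hat q}-t_q)^2$. The paper proves the tail bound by hand (DKW plus a union bound over shells $[t_q+hkm^{-1/3},t_q+h(k+1)m^{-1/3}]$) rather than citing Durot--Lopuha\"a, and it packages the argument through an explicit truncation event $A_\eps=\{|\hat q-q|\le m^{-1/3}\eps,\;|\tau_{\hat q}-t_q|\le m^{-2/9}\}$ with $\P(A_\eps^c)=o(m^{-2/3})$ under~\eqref{eq-stronger}; but the content is the same as what you outline.
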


The proof of Theorem~\ref{thm-asymptotics} is deferred to the Appendix. It is well-known that the Grenander estimator~$\gren$ estimates~$f$ at a cube root rate pointwise, away from zero, but this result, due to \citep{rao1969estimation}, is too weak to describe the behavior of our procedure. We rely on a stronger version of this result that approximates the local behavior of the Grenander estimator near~$t_{\tol}$. 

The distributional result \eqref{eq-lfdr-asymptotics} complements our result from Theorem~\ref{thm-main}, by showing that $\lfdr(\tau_{\tol}) = \max_{i\in\cR_{\tol}} \lfdr(p_i)$ is not only controlled in expectation, but also concentrates at rate $m^{-1/3}$ around its expectation. In particular, because $\P\{Z \geq 1\} \approx 0.05$, we have
\[
\frac{\lfdr(\tau_{\tol}) - \pi_0\tol}{\pi_0\tol} \;\leq\; m^{-1/3} \left(4\tol^2 \cdot |f'(t_{\tol})|\right)^{1/3},
\]
with roughly $95\%$ probability in large samples. For example, suppose we use $\tol=0.2$, so $f(t_{\tol}) = 5$, and suppose that $f'(t_{\tol})=-50$. Then, whereas Theorem~\ref{thm-main} guarantees $\E\left[\lfdr(\tau_{\tol})\right] \leq 0.2$ exactly, the asymptotic estimate from Theorem~\ref{thm-asymptotics} bounds the $95$th percentile of $\lfdr(\tau_{\tol})$ at $0.24$ if $m=1000$, or at $0.21$ if $m=64,000$.

To understand why the error is of order $m^{-1/3}$, consider fixed $\tol$ and recall that the threshold $\tau_{\tol}$ maximizes the stochastic process 
\[
U(t) \coloneqq F_m(t) - F_m(t_{\tol}) - \frac{t-t_{\tol}}{\tol}.
\]
Because $f(t_{\tol}) = \tol^{-1}$, we have for $t$ near $t_{\tol}$,
\[
F(t) - F(t_{\tol}) \;\approx\; \frac{t-t_{\tol}}{\tol} + \frac{f'(t_{\tol})}{2} (t-t_{\tol})^2.
\]
Introducing the local parameterization $t = t_{\tol} + m^{-a}h$ for $a>0$ leads to
\[
U(t_{\tol} + m^{-a}h) \;\approx\; -\frac{|f'(t_{\tol})|}{2} \cdot \frac{h^2}{m^{2a}} \,+\, \mathcal{N}\left(0, \,\frac{h}{\tol m^{a+1}}\right).
\]
Setting $a=1/3$ balances the mean and variance, giving
\[
m^{2/3} U(t_{\tol} + m^{-1/3}h) \;\stackrel{d}{\to}\; -\frac{|f'(t_{\tol})|}{2} h^2 + \mathcal{N}\left(0, \frac{h}{\tol}\right).
\]
Under this local scaling, $U(t)$ converges to a Brownian motion with parabolic drift, and its maximizer $\tau_{\tol}$ converges to Chernoff's distribution. Theorem~\ref{thm-asymptotics} applies a more careful version of this argument, replacing $F_m(t)$ with its least concave majorant (LCM) $\hat{F}_m(t)$ and applying a result characterizing the process $\gren(t)$ under the same local scaling. The corresponding results for $\lfdr(\tau_{\tol})$ follow from first-order Taylor expansion of $\lfdr(t) = \pi_0/f(t)$ around $t_{\tol}$.

By specializing Theorem~\ref{thm-asymptotics} to $\tol=\alpha/\pi_0$ and $\hat{\tol} = \alpha/\hat\pi_0$, we obtain the limiting regret for our procedure with a known or accurately estimated null proportion.

\begin{theorem}\label{thm-regret}
Suppose $p_1,\ldots,p_m$ follow the Bayesian two-groups model~\eqref{eq-two-groups}, with  $\pi_0 \in (0,1)$, $f_0 = 1_{[0,1]}$, and $f_1$ non-increasing. Assume additionally that
\begin{enumerate}[(i)]
    \item there is a unique value $\tau^*\in (0,1)$ for which $\lfdr(\tau^*) = \frac{\pi_0}{f(\tau^*)} = \alpha$, 
    \item $f$ is continuously differentiable in a neighborhood of $\tau^*$ with $f'(\tau^*) < 0$, and
    \item $\hat\pi_0$ is any estimator of $\pi_0$ with $\P\left\{m^{1/3}(\hat\pi_0 - \pi_0) > \varepsilon\right\}  = o\left(m^{-2/3}\right)$ for all $\varepsilon > 0$.
\end{enumerate}
Then we have, as $m\to \infty$,
\begin{equation}
   \label{eq-regret-asymptotics}
    m^{2/3}\Regret_m(\cR_{\alpha/\hat\pi_0})
    \;\to\; \left(\frac{\alpha^2}{2\pi_0^2} \cdot |f'(\tau^*)|\right)^{-1/3} \textnormal{Var}(Z),
\end{equation}
where $Z$ follows Chernoff's distribution defined in \eqref{eq-chernoff}, and $\textnormal{Var}(Z) \approx 0.26$.
\end{theorem}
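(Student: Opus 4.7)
Write $\hat q = \alpha/\hat\pi_0$, $q = \alpha/\pi_0$, $\hat\tau = \tau_{\hat q}$, and $h(t) \coloneqq (1+\lambda)\lfdr(t) - 1$, noting that $h(\tau^*) = 0$ (since $\lfdr(\tau^*) = \alpha = 1/(1+\lambda)$) and $h(t) f(t) = (1+\lambda)\pi_0 - f(t)$ (since $\lfdr(t) f(t) = \pi_0$). The plan is to rewrite $\Regret_m(\cR_{\hat q})$ as $\E[\rho(\hat\tau)]$ plus a negligible empirical-process remainder, Taylor expand $\rho$ around $\tau^*$ via~\eqref{eq-regret-fixed-t-taylor}, and invoke Theorem~\ref{thm-asymptotics} to control the resulting second moment of $\hat\tau - \tau^*$.

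First, I would condition on $(p_1, \ldots, p_m)$ and use $\E[1\{H_i = 0\} \mid p_i] = \lfdr(p_i)$ together with exchangeability of the $(H_i, p_i)$ pairs (and the $p$-value symmetry of $\hat\tau$) to obtain
\[
\E[L_\lambda(H, \cR_{\hat q})] = \E\!\int_0^{\hat\tau} h\, dF_m, \qquad \E[L_\lambda(H, \cR^*)] = \int_0^{\tau^*} h\, dF.
\]
Since $\int_{\tau^*}^{\hat\tau} h\, dF = \rho(\hat\tau)$ by~\eqref{eq-regret-fixed-t}, subtracting yields the decomposition
\[
\Regret_m(\cR_{\hat q}) = \E[\rho(\hat\tau)] + E_m, \qquad E_m \coloneqq \E\!\int_0^{\hat\tau} h\, d(F_m - F).
\]
Splitting at $\tau^*$, the piece on $[0, \tau^*]$ has mean zero (it does not involve $\hat\tau$), so $E_m = \E\!\int_{\tau^*}^{\hat\tau} h\, d(F_m - F)$. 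Integration by parts, using $h(\tau^*) = 0$ and $|h(t)| \le C|t - \tau^*|$ near $\tau^*$, bounds $|E_m|$ by a constant times $\E[|\hat\tau - \tau^*| \cdot \|F_m - F\|_\infty]$; by Cauchy--Schwarz, the DKW inequality ($\E\|F_m - F\|_\infty^2 = O(m^{-1})$), and $\E[(\hat\tau - \tau^*)^2] = O(m^{-2/3})$ (from Theorem~\ref{thm-asymptotics} with uniform integrability), this gives $|E_m| = O(m^{-5/6}) = o(m^{-2/3})$.

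Next, \eqref{eq-regret-fixed-t-taylor} gives $\rho(\hat\tau) = \tfrac{|f'(\xi)|}{2}(\hat\tau - \tau^*)^2$ for some $\xi$ between $\hat\tau$ and $\tau^*$, with $|f'(\xi)| \to |f'(\tau^*)|$ in probability by continuity. Assumption~(iii) and $\pi_0 \in (0, 1)$ imply $\P\{m^{1/3}|\hat q - q| > \eps\} = o(m^{-2/3})$ for every $\eps > 0$, so \eqref{eq-stronger} holds with $t_q = \tau^*$. Theorem~\ref{thm-asymptotics} then delivers $m^{2/3}\textnormal{Var}(\hat\tau) \to (qf'(\tau^*)^2/4)^{-2/3}\textnormal{Var}(Z)$ and $\E\hat\tau \to \tau^*$. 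Combining the distributional convergence in~\eqref{eq-threshold-asymptotics} (whose limit is symmetric and mean-zero) with uniform integrability of $m^{1/3}(\hat\tau - \tau^*)$ (implicit in the proof of Theorem~\ref{thm-asymptotics}) sharpens the bias to $\E\hat\tau - \tau^* = o(m^{-1/3})$, so a bias-variance decomposition yields $m^{2/3}\E[(\hat\tau - \tau^*)^2] \to (qf'(\tau^*)^2/4)^{-2/3}\textnormal{Var}(Z)$. Combining with the first step and the Taylor expansion (with uniform integrability used to pass $|f'(\xi)|$ to $|f'(\tau^*)|$ inside the expectation),
\[
m^{2/3}\Regret_m(\cR_{\hat q}) \to \frac{|f'(\tau^*)|}{2} \cdot \left(\frac{qf'(\tau^*)^2}{4}\right)^{-2/3}\!\textnormal{Var}(Z) = \left(\frac{\alpha^2|f'(\tau^*)|}{2\pi_0^2}\right)^{-1/3}\!\textnormal{Var}(Z),
\]
after substituting $q = \alpha/\pi_0$ and simplifying.

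The main obstacle is controlling $E_m$ in expectation: the $O_P(m^{-5/6})$ rate is clear from integration by parts, but promoting it to an $L^1$ bound requires a second-moment bound on $\hat\tau - \tau^*$ and the uniform integrability arguments underlying Theorem~\ref{thm-asymptotics}. The sharpening of the bias from $\E\hat\tau \to \tau^*$ to $\E\hat\tau - \tau^* = o(m^{-1/3})$ is a similar wrinkle that hinges on the same uniform integrability together with the symmetry of the Chernoff limit.
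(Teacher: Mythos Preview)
Your proposal is correct and shares the paper's overall architecture: both condition on the $p$-values to write
\[
\Regret_m(\cR_{\hat q}) \;=\; \E\big[\rho(\tau_{\hat q})\big] \;+\; \E\!\int_{\tau^*}^{\tau_{\hat q}} h\,d(F_m-F),
\]
handle the main term via the Taylor expansion~\eqref{eq-regret-fixed-t-taylor} together with the second-moment convergence furnished by Theorem~\ref{thm-asymptotics} and Lemma~\ref{lemma-uniform-integrability}, and show the remainder is $o(m^{-2/3})$.

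The genuine divergence is in the remainder bound. The paper restricts to the truncation event $A_\varepsilon=\{|\tau_{\hat q}-\tau^*|\le m^{-2/9}\}$, passes to a supremum over $|t-\tau^*|\le m^{-2/9}$, discretizes that window into bins of width $w\asymp m^{-13/27}$, applies Chebyshev and a union bound over bins after a conditional variance calculation, and integrates the resulting tail bound; this machinery yields only the rate $m^{-2/3-1/27}$. Your route is considerably lighter: integration by parts with $h(\tau^*)=0$ and the monotonicity of $h$ gives the pointwise bound $\bigl|\int_{\tau^*}^{\tau_{\hat q}} h\,d(F_m-F)\bigr|\le 2|h(\tau_{\hat q})|\,\|F_m-F\|_\infty$, and then Cauchy--Schwarz combined with $\E\|F_m-F\|_\infty^2=O(m^{-1})$ and $\E[(\tau_{\hat q}-\tau^*)^2]=O(m^{-2/3})$ delivers $O(m^{-5/6})$ directly. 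This is both shorter and sharper, though it still leans on the same truncation to make $|h(\tau_{\hat q})|\le C|\tau_{\hat q}-\tau^*|$ rigorous, since $f$ is only assumed differentiable near $\tau^*$. For the main term you take an extra step the paper does not spell out---upgrading $\E\tau_{\hat q}\to\tau^*$ to $m^{1/3}(\E\tau_{\hat q}-\tau^*)\to 0$ via uniform integrability of $m^{1/3}(\tau_{\hat q}-\tau^*)$ and the symmetry of $Z$---whereas the paper sidesteps this by working directly with $\E[Y_m^2]\to c^2\E[Z^2]=c^2\textnormal{Var}(Z)$ on the truncated scale; both arguments are valid and rest on the same $L^2$ uniform integrability.
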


Theorems~\ref{thm-asymptotics}--\ref{thm-regret} deal with the regret for $\pi_0 \in (0,1)$. Under the global null, represented in the Bayesian model by $\pi_0 = 1$, the behavior is different and the regret is simply $\omega\E V$, which is $O(m^{-1})$, as we see next.

\begin{proposition}\label{prop-global-null} Suppose~$(p_i)_{i=1}^m$ follow a two-groups model~\eqref{eq-two-groups} with $f_0 = 1_{[0,1]}$ and~$\pi_0=1$, i.e. $H_i=0$ for all~$i$ and~$p_i\simiid\textnormal{Unif}(0,1)$. Then as $m\to \infty$, we have
\[
m\,\Regret_m(\cR_{\tol})
\to \omega \sum_{k=1}^\infty\P\left\{U_k \le\tol\right\}, \qquad \textnormal{for } U_k\sim \textnormal{Gamma}(k, k),
\]
which is finite for every~$\tol\in [0, 1)$.
\end{proposition}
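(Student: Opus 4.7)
Under the global null $\pi_0=1$, every rejection is a false discovery, so $L_\lambda(H,\cR_q)=\lambda R_q/m$; since $\lfdr\equiv 1>\alpha$ the oracle rejects nothing and $L_\lambda(H,\cR^*)=0$. Hence $m\cdot\Regret_m(\cR_q)=\lambda\,\E R_q$, and the proposition reduces to showing $\E R_q \to \sum_{k\ge 1} \P\{U_k\le q\}$ as $m\to\infty$. My plan is to first establish the exact finite-$m$ identity
\[
\E R_q \;=\; \sum_{k=1}^m \P\{p_{(k)} < qk/m\},
\]
valid whenever $q(1+1/m)<1$, and then pass to the limit using a Poisson approximation.

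\textbf{Sparre--Andersen applied to the procedure.} To derive the identity, I would interpret $R_q$ as the last, or equivalently (by continuity of the $p$-values) the first, argmin of $Z_k := p_{(k)}-qk/m$ over $k=0,\ldots,m$. Setting $p_{(m+1)}:=1$ and extending the sequence by $Z_{m+1}=1-q-q/m$, the increments $X_k = Z_k-Z_{k-1} = D_k - q/m$ are built from the uniform spacings $(D_1,\ldots,D_{m+1}) = (p_{(1)},\,p_{(2)}-p_{(1)},\,\ldots,\,1-p_{(m)})$, which are Dirichlet-symmetric and hence exchangeable with continuous joint density. Applying the combinatorial Sparre--Andersen theorem to $-X_k$ yields
\[
\text{first argmin of }(Z_0,\ldots,Z_{m+1}) \;\stackrel{d}{=}\; \#\{1\le k\le m+1:\, Z_k<0\}.
\]
For $m$ large enough that $Z_{m+1}>0$, the index $m+1$ is neither a minimizer (since $Z_0=0<Z_{m+1}$) nor counted on the right, so both sides coincide with their analogues over $\{0,\ldots,m\}$. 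Taking expectations produces the identity.

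\textbf{Passing to the limit.} For each fixed $k$, the quantity $\P\{p_{(k)}<qk/m\}=\P\{\textnormal{Bin}(m,qk/m)\ge k\}$ converges to $\P\{\textnormal{Poisson}(qk)\ge k\}=\P\{U_k\le q\}$ by the Poisson approximation together with the arrival-time duality for $U_k\sim\textnormal{Gamma}(k,k)$. To interchange the limit and the sum I would apply a standard Chernoff bound $\P\{\textnormal{Bin}(m,qk/m)\ge k\}\le \exp(-k\,I(q))$ with $I(q)=q-1-\log q>0$ on $(0,1)$; this domination is summable in $k$ and uniform in $m$, so dominated convergence gives both the convergence of $\E R_q$ and the finiteness of the limit. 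The only delicate step in the argument is invoking Sparre--Andersen in the exchangeable (rather than iid) setting and cleanly handling the boundary term at $k=m+1$; once the finite-$m$ identity is in hand, the asymptotic analysis is routine.
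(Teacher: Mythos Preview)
Your proposal is correct and follows essentially the same route as the paper: reduce to $\E R_q$, invoke Sparre--Andersen on the exchangeable-increment walk to get $\E R_q=\sum_{k=1}^m\P\{p_{(k)}\le qk/m\}$, and pass to the limit via the Poisson approximation. The paper applies Kallenberg's Corollary~11.14 directly to the walk on $\{0,\ldots,m\}$ (the first $m$ uniform spacings are already exchangeable, so the extension to $k=m+1$ is unnecessary though harmless), and simply says ``law of rare events'' for the limit, whereas you supply the explicit Chernoff domination $\P\{\mathrm{Bin}(m,qk/m)\ge k\}\le e^{-k(q-1-\log q)}$ to justify interchanging sum and limit and to verify finiteness---a detail the paper leaves to the reader.
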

Proposition~\ref{prop-global-null} is closely related to results derived in \citet{finner2002multiple}.

\section{Numerical results}\label{sec-simulations}

\subsection{Demonstration of theoretical results}

This section highlights our main results on simulation experiments. We adapt a simulation setting of~\citet{benjamini1995controlling} to the two-groups model~\eqref{eq-two-groups}. The observations are~$m=64$ independent, normally distributed random variables~$Y\sim\mathcal{N}(\mu, I_m)$, and the~$i^\textnormal{th}$ null hypothesis is that~$\mu_i = 0$, i.e. $H_i\coloneqq 1\{\mu_i\ne 0\}$. The component means~$\mu_i$ are independent and identically distributed random variables with
\begin{align}\label{eq-means-sim}
\mu_i\simiid \begin{cases}
0 & \textnormal{with probability } \frac{3}{4}, \\
5\frac{j}{4} & \textnormal{with probability } \frac{1}{16}, \textnormal{ for } j=1,\ldots,4.
\end{cases}
\end{align}
We compute one-tailed~$p$-values $p_i = \bar{\Phi}(Y_i)$, where $\bar{\Phi}$ denotes the standard Gaussian survival function. The pairs~$(H_i, p_i)_{i=1}^m$ follow a two-groups model with~$\pi_0 = 0.75$, $f_0 = 1_{[0,1]}$ and alternative density
\begin{align}\label{eq-f1-bh-E}
f_1(t) = \frac{\frac{1}{4}\sum_{j=1}^4\phi\left(\bar\Phi^{-1}\left(t\right)-5\frac{j}{4}\right)}{\phi\left(\bar\Phi^{-1}\left(t\right)\right)} \qquad \textnormal{for } 0\le t\le 1,
\end{align}
where $\phi$ denotes the probability density function of the standard Gaussian distribution. The top half of Figure~\ref{fig-main-sim} shows the mixture density and corresponding~$\lfdr$. 

We repeatedly sample from the above two-groups model for a total of~$10^5$ simulation replicates. The bottom half of Figure~\ref{fig-main-sim} shows the~$\FDR$ (left) and $\maxlfdr$ (right) for our procedure, at conservative level $\tol$ and estimated level $\tol/\hat\pi_0^\lambda$ with threshold~$\lambda=\frac{1}{2}$\edit{, along with the corresponding quantities for the BH procedure at levels $\q$ and $\q/\hat\pi_0^\lambda$ shown for comparison.} The BH procedure at level~$\q$, shown as a solid red line, achieves~$\FDR$ exactly~$\pi_0\q$, whereas its $\maxlfdr$ can be much larger than $\pi_0\q$. For instance, the BH procedure at level~$\q = 0.2$ has~$\maxlfdr$ above~$50\%$, so the least promising rejection is more likely to be null than non-null. By contrast, the SL procedure, shown as a solid blue line, controls $\FDR$ substantially below the level~$\pi_0\tol$ but has $\maxlfdr$ equal to~$\pi_0\tol$ as guaranteed by Theorem~\ref{thm-main}. The modified BH and SL procedures that incorporate~$\hat\pi_0^\lambda$ achieve~$\FDR$ and~$\maxlfdr$ just below~$\q$ and~$\tol$, respectively. \edit{Importantly, because the BH procedure at level $q$ is not intended to control $\maxlfdr$ at level $q$, these results do not indicate a failure of the BH procedure to achieve its advertised control; likewise, because the SL procedure at level $\tol$ is not intended to target FDR control at level $\tol$, the fact that its FDR is well below $\pi_0\tol$ does not indicate that the method is overly conservative.}

\edit{In Figure~\ref{fig-lfdr-asymptotics}, we assess how well the maximum $\max_{i\in \cR}\lfdr(p_i)$ concentrates around its expectation $\maxlfdr(\cR)$ by plotting the interquartile range of $\max_{i\in \cR}\lfdr(p_i)$ for the BH and SL procedures. The blue x's indicate the asymptotic prediction~\eqref{eq-lfdr-asymptotics} of Theorem~\ref{thm-asymptotics}. For $m=1024$, the maximum concentrates well, and the theoretical prediction is quite accurate.}

\FloatBarrier

\begin{figure}[p!]
	\centering
    \centerline{
	\includegraphics{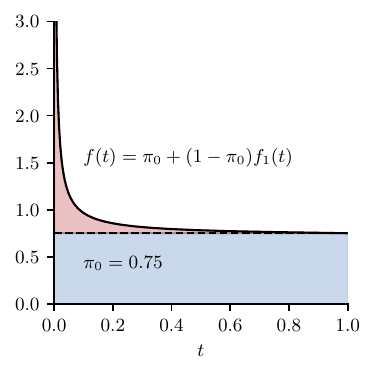}
	\includegraphics{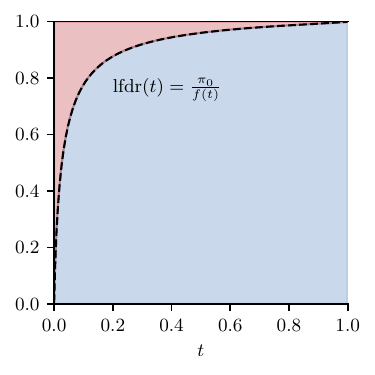}
	}
	\caption{Simulation for $m=64$ independent, one-tailed $Z$-tests. Above: Mixture density~$f$ (left) and~$\lfdr$ (right), with alternative density~$f_1$ defined in~\eqref{eq-f1-bh-E} and null proportion~$\pi_0 = 0.75$. Note~$f_1$ diverges as $t\downarrow 0$. Below: Comparison of $\FDR$ control (left) and~$\maxlfdr$ control (right) on simulated data. The estimate of the null proportion is~\eqref{eq-def-storey} with~$\lambda=0.5$. The black, dash-dotted lines have slopes~$1$ and~$0.75$.}\label{fig-main-sim}
	\centerline{
	\includegraphics{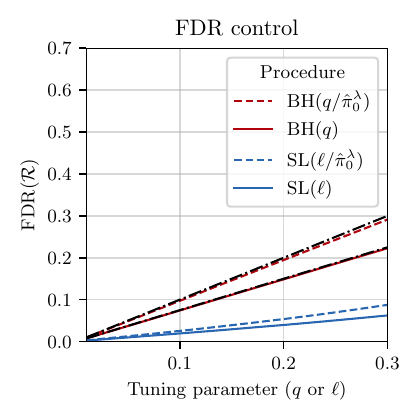}
	\includegraphics{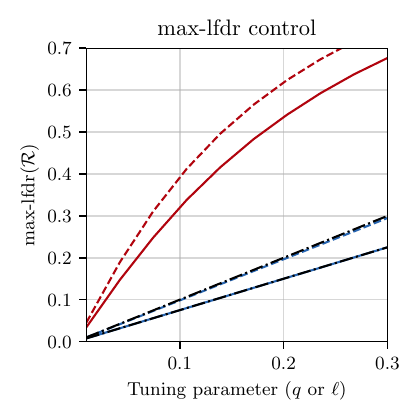}
	}
\end{figure}

\begin{figure}[p!]
    \centering
	\centerline{
	\includegraphics{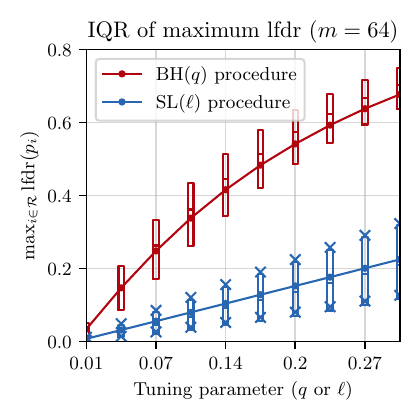}
	\includegraphics{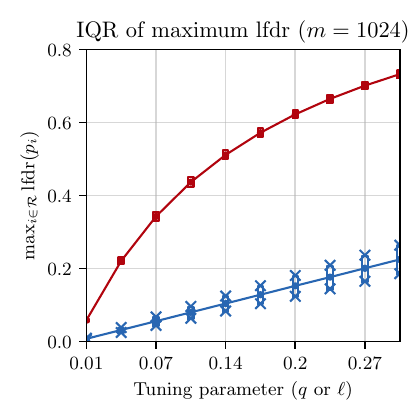}
	}
	\caption{Interquartile range of~$\max_{i\in \mathcal{R}}\lfdr(p_i)$ for the SL and BH procedures as a function of the input levels~$\tol$ and $\q$, respectively. The blue x's indicate the asymptotic predictions of the IQR from Theorem~\ref{thm-asymptotics}. For this simulation, we used the alternative density~$f_1$ defined in~\eqref{eq-f1-bh-E} and null proportion~$\pi_0 = 0.75$. Left: $m=64$ hypotheses. Right: $m=1,024$ hypotheses.}\label{fig-lfdr-asymptotics}
 ~\\~\\
	\centering
	\centerline{
	\includegraphics[height=20em]{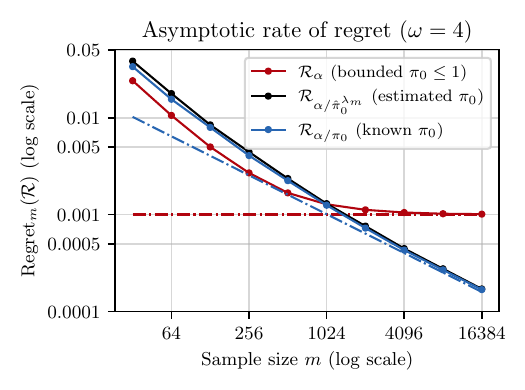}
	}
	\caption{A log-log plot of the regret~\eqref{eq-def-regret} as a function of the sample size~$m$. The blue dash-dotted line shows the asymptotic prediction~\eqref{eq-regret-asymptotics} of Theorem~\ref{thm-regret}, and the red dash-dotted line shows the asymptotic regret~\eqref{eq-regret-fixed-t} of the inconsistent procedure~$\mathcal{R}_\alpha$ which bounds the null proportion~$\pi_0\le 1$ instead of estimating it. For this simulation, we used the alternative density~$f_1$ defined in~\eqref{eq-f1-bh-E}, cost ratio~$\omega=4$ and null proportion~$\pi_0 = 0.75$.
	}\label{fig-regret-asymptotics}	
\end{figure}

\FloatBarrier

Figure~\ref{fig-regret-asymptotics} shows a log-log plot of the regret as a function of the sample size~$m$. The red curve shows the regret of our uncorrected procedure~$\cR_\alpha$ for~$\alpha = 0.05$, which asymptotically tends to $\rho(t_{\alpha})$ and hence asymptotically incurs some non-vanishing regret \edit{(shown as a red, dash-dotted line)} described in Section~\ref{sec-popn-regret}. The \edit{solid} blue curve shows the regret of the corrected procedure~$\cR_{\alpha/\pi_0}$ with known~$\pi_0$. For larger samples, the simulated regret closely matches the asymptotic prediction from~\eqref{eq-regret-asymptotics}, shown \edit{as a dash-dotted blue line}. The \edit{solid black} curve (which is nearly indistinguishable from the \edit{solid blue} curve) shows the corrected procedure with an estimated null proportion~$\hat{\pi}_0^\lambda$ based on~\eqref{eq-def-storey} with~$\lambda = 1-m^{-1/5}$. 

\subsection{Robustness of results to assumptions}\label{sec-robustness}

\edit{Next, we assess the robustness of~$\maxlfdr$ control to certain violations of the assumptions of Theorem~$\ref{thm-main}$, starting with the independence assumption. When reasoning about the behavior of our procedure under dependence, we ought to keep in mind that the $p$-value threshold~$\tau_\tol$ performs a simple operation on the ecdf:
\[
\tau_\tol = \argmin_{t\in [0,1]}\,t-\tol F_m(t),
\]
where we take the largest minimizer to agree with definition~\eqref{eq-def-procedure}. Roughly, if the assumptions are violated in a way that does not have a major influence on the fluctuations of $F_m$ around its expectation $F = \mathbb{E}F_m$, we can expect that, for large~$m$ and under regularity conditions, the SL threshold~$\tau_\tol$ approaches the population minimizer
\begin{align}\label{eq-popn-threshold} 
t_\tol = \argmin_{t\in [0,1]}\,t -\tol F(t).    
\end{align}

\begin{proposition}\label{prop-dependence} Suppose $p_1,\ldots,p_m\in [0,1]$ are (possibly dependent) continuous random variables. Let $F_m$ denote the empirical cdf of $p_1,\ldots,p_m$, and let $F = \mathbb{E}F_m$ denote the average marginal cdf. Assume $F$ is concave and that there is a unique minimizer~$t_\tol$ in~\eqref{eq-popn-threshold}. If $|F_m-F|_\infty\stackrel{p}{\to} 0$, then $\tau_{\tol} \stackrel{p}{\to}t_{\tol}$.
\end{proposition}

Proposition~\ref{prop-dependence} shows that a sufficient condition for consistency of the SL threshold~$\tau_{\tol}$ to the population threshold~$t_{\tol}$ is uniform convergence of the ecdf $|F_m-F|_\infty\stackrel{p}{\to}0$. If $F$ is also continuously differentiable, then we may further conclude $\frac{\pi_0}{f(\tau_\tol)}\stackrel{p}{\to}\frac{\pi_0}{f(t_\tol)}$, where $t_\tol$ is the largest $t\in  [0,1]$ such that $\frac{1}{f(t)}\le\tol$. In particular, if the $p$-values are identically distributed, this result implies that the $\lfdr$ is controlled asymptotically, i.e. $\lfdr(\tau_\tol) \stackrel{p}{\to}\lfdr(t_\tol) \le \pi_0\tol$.

We next illustrate Proposition~\ref{prop-dependence} by considering two simulation settings in which the observations $Y$ are dependent, one in which uniform convergence holds and one in which it fails. Let}~$Y\sim \mathcal{N}(\mu, \Sigma)$ for some~$m\times m$ positive definite covariance matrix~$\Sigma$, where the means~$\mu_i$ are iid as before according to~\eqref{eq-means-sim}. We consider the equicorrelation model
\begin{align}\label{eq-equicorrelated-covariance}
\Sigma_{\textnormal{EQ}} 
\coloneqq
\begin{bmatrix}
1 & \rho & \cdots & \cdots & \rho \\
\rho & 1 & \ddots & & \vdots \\
\vdots & \ddots & \ddots & \ddots & \vdots \\
\vdots&&\ddots&\ddots& \rho \\
\rho&\cdots&\cdots&\rho & 1 \\
\end{bmatrix}
\end{align}
for some correlation~$\rho\in (-\frac{1}{m-1}, 1)$. We also consider a stationary autoregressive model
\begin{align}\label{eq-autoregressive-covariance}
\Sigma_{\textnormal{AR}} \coloneqq \begin{bmatrix}
1 & \rho & \rho^2 &  & \cdots & \rho^{m-1}  \\
\rho & 1 & \rho & \rho^2 &  & \rho^{m-2} \\
\rho^2 & \rho & 1 & \rho &\ddots &  \rho^{m-3} \\
\vdots & \ddots & \ddots & \ddots & \ddots &\vdots \\
\vdots&&\ddots&\ddots& 1 &\rho \\
\rho^{m-1} & \cdots & \cdots & \rho^2 & \rho & 1
\end{bmatrix}
\end{align}
for any autocorrelation~$\rho$ satisfying $|\rho| < 1$. 

Figure~\ref{fig-dependence} shows the results of our simulation under dependence as a function of the marginal correlation~$\rho$, fixing the target level at~$\tol=0.2$. The SL procedure has~$\maxlfdr$ above~$\pi_0\tol$ in both cases, with~$\maxlfdr$ increasing with the correlation~$\rho$. The modified SL procedure inflates~$\maxlfdr$ subtantially above level~$\tol$, especially in the equicorrelated model. \edit{These results corroborate Proposition~\ref{prop-dependence}, since uniform convergence of the ecdf fails in the equicorrelated model but holds in the autoregressive model \citep{tucker1959generalization}.} 

Minimizing weighted classification risk under dependence requires thresholding the local false discovery rate using the full posterior, i.e. $\mathbb{P}\left(H_i = 0 \mid p_1,\ldots, p_m\right).$ However, the oracle~$\mathcal{R}^*$ (see~\eqref{eq-def-oracle}) rejects~$p$-values on the basis of~$\lfdr(t) = \mathbb{P}\left(H_i = 0 \mid p_i=t\right)$, the posterior probability of the null given only the corresponding~$p$-value. Hence, even if we could consistently estimate~$\pi_0$ and~$F$, our procedure~$\mathcal{R}_{\alpha/\hat{\pi}_0}$ would only target the best~\emph{separable} oracle~$\mathcal{R}^*$, and there may be a considerable gap between the risk of the best separable rule and the risk of the full Bayesian analysis. 

We also simulate a setting in which the alternative density~$f_1$ is not monotone. \edit{For this simulation, we set $\pi_0 = 0.75$, and we let the non-null $p$-values follow an equal mixture of Beta$(2, 100)$ and Beta$(.01, 2)$ distributions: the alternative density is
\begin{align}\label{eq-misspecified-alternative}
f_1(t)
\coloneqq \frac{1}{2}\frac{t(1-t)^{99}}{\text{B}(2, 100)} + \frac{1}{2}\frac{t^{-.99}(1-t)}{\text{B}(.01, 2)},    
\end{align}
where $\text{B}(\alpha, \beta)$ denotes the Beta function.} Figure~\ref{fig-misspecification} shows the results of this simulation. Since the non-decreasing~$\lfdr$ assumption is violated, our procedure does not control~$\maxlfdr$. \edit{Of course, the first part of Theorem~\ref{thm-main} still applies, so the SL procedure controls the~$\lfdr$ at the decision boundary~$\tau_\ell$ at level~$\pi_0\ell$; however, as the top right of Figure~\ref{fig-misspecification} indicates, the~$\max_{i\in \cR_\tol} \lfdr(p_i)$ can be strictly larger than~$\lfdr(\tau_\tol)$, because the maximum lfdr may be attained in the rejection region's interior.} By contrast, the BH procedure still controls~$\FDR$ exactly at level~$\pi_0\tol$.  

\FloatBarrier

\begin{figure}[p!]
	\centering
    \centerline{
	\includegraphics{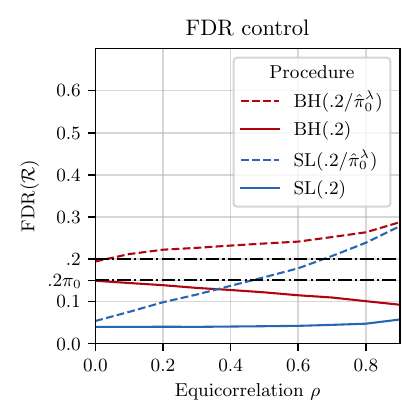}
	\includegraphics{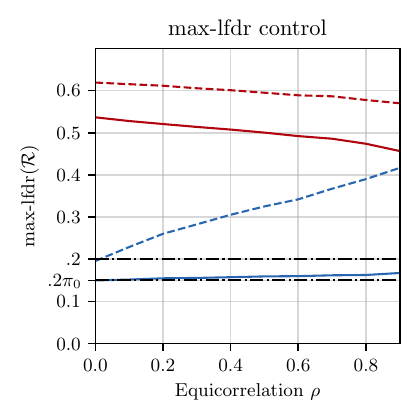}
	}
	\caption{Simulations for dependent, one-tailed $Z$-tests with nominal level~$\tol = \q = 0.2$, null proportion~$\pi_0 = 0.75,$ and $m=64$ hypotheses. The BH procedure is shown in red and the SL procedure in blue. The solid lines indicate the results for the procedures run at input level~$\tol$, and the dashed lines indicate the results for the modified procedures based on estimating~$\pi_0$ with~$\hat{\pi}_0^\lambda$ where~$\lambda=0.5$. Above: results for equicorrelated model~\eqref{eq-equicorrelated-covariance}. Below: results for autoregressive model~\eqref{eq-autoregressive-covariance}.}\label{fig-dependence}
	\centerline{
	\includegraphics{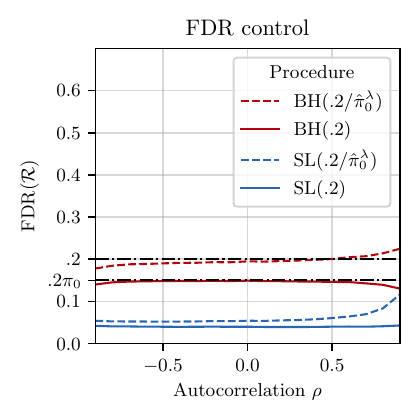}
	\includegraphics{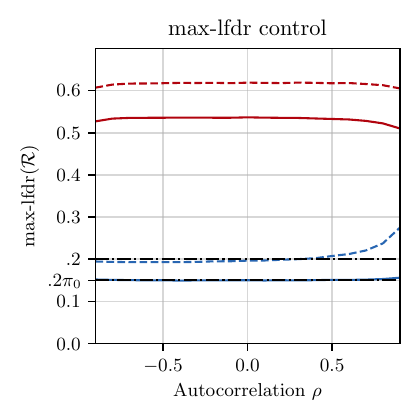}
	}
\end{figure}

\begin{figure}[p!]
	\centering
    \centerline{
    \includegraphics{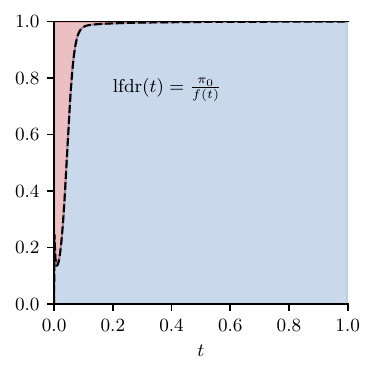}
	\includegraphics{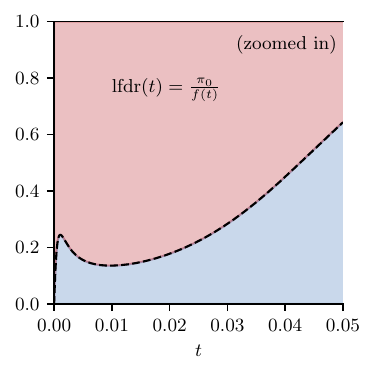}
	}
	\caption{Simulation for \edit{independent tests with misspecified shape constraint on the alternative density~\eqref{eq-misspecified-alternative}, null proportion~$\pi_0=0.75$, and $m=64$ hypotheses}. Above: \edit{$\lfdr$ shown over its entire domain $[0,1]$ (left) and zoomed into the interval $[0, 0.05]$ (right) to emphasize the violation in monotonicity.} Below: Comparison of $\FDR$ control (left) and~$\maxlfdr$ control (right) on simulated data. The estimate of the null proportion is~\eqref{eq-def-storey} with~$\lambda=0.5$. The black, dash-dotted lines have slopes~$1$ and~$0.75$.}\label{fig-misspecification}
	\centerline{
	\includegraphics{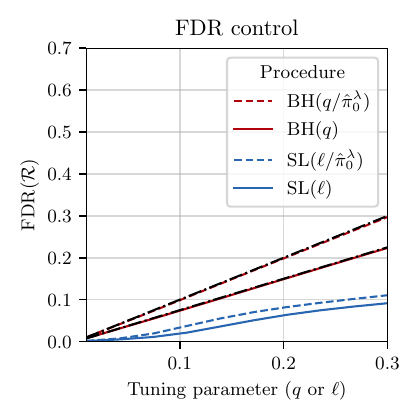}
	\includegraphics{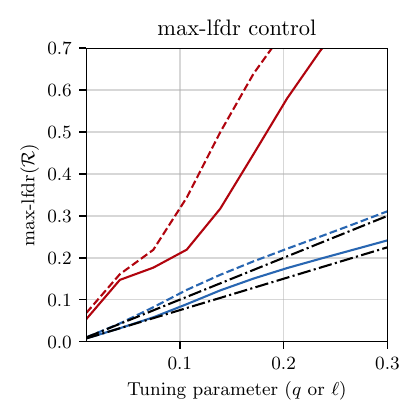}
	}
\end{figure}

\section{Discussion}\label{sec-discussion}

In this work we have introduced a new error criterion, the $\maxlfdr$, which modifies the $\FDR$ by redirecting attention away from the average quality of the rejection set and toward the rejections that are close to the rejection boundary. Despite the seeming difficulty of measuring the quality of a single rejection, we also introduce a simple new multiple testing procedure that controls the $\maxlfdr$ at level $\pi_0\tol$ in finite samples, where $\tol$ is a tuning parameter and $\pi_0$ is the null proportion. We assume only that the data follow a Bayesian two-groups model in which smaller $p$-values reflect stronger evidence against the null. We find that our method is better able than the BH method to adapt to the unknown problem structure, and to perform well without knowledge of the true underlying distribution.

The BH procedure owes its enduring utility for $\FDR$ control in part to its versatility beyond this basic setting, however. It is known to still control $\FDR$, for instance, when the null $p$-values are super-uniform and under certain forms of positive dependence, two of many possible extensions that we leave open for our procedure.

Another seeming advantage of the $\FDR$ criterion is that it requires no Bayesian assumptions, whereas the $\maxlfdr$ is only defined with reference to a Bayesian model. \edit{In particular, in the two-groups model we consider, the pairs~$(H_i, p_i)_{i=1}^m$ are independent and identically distributed. We might prefer to treat the sequence~$(H_i)_{i=1}^m$ as fixed, and $(p_i)_{i=1}^m$ as independent but {\sl{not necessarily}} identically distributed.} A possible avenue for generalizing the $\maxlfdr$ to frequentist settings is to work with its characterization as the probability that the last rejection is a false discovery. Indeed, our proof of Theorem~\ref{thm-main} implies that \edit{this probability} is controlled even conditional on~$(H_i)_{i=1}^m$. This is initially puzzling: if each $H_i$ is fixed, then how can we speak of the probability that the last rejection is a false discovery? The answer is that $H_{(R)}$ is random even if~$(H_i)_{i=1}^m$ is fixed, since the index~$R$ is random. We leave further development of the frequentist connection to the $\maxlfdr$ to future work.

\FloatBarrier

\begin{appendix}
\section{Proofs}

\subsection{Estimation of \texorpdfstring{$\pi_0$}{the null proportion}} We first prove that our modified procedure~\eqref{eq-def-procedure-modification} controls the~$\maxlfdr$ at level at most~$\tol$.

\begin{proof}[Proof of Theorem~\ref{thm-main-modification}.]
As in the proof of Theorem~\ref{thm-main}, we have 
\[
\maxlfdr\left(\cR_{\tol}^\lambda\right) \;=\; \P\left\{H_{(R_{\tol}^\lambda)} = 0, \,R_{\tol}^\lambda > 0\right\} \;=\; m\,\P\left\{H_m = 0,\, p_{(R_{\tol}^\lambda)} = p_m\right\}.
\]
Define the $\sigma$-field $\mathcal{F} = \sigma\left(p_1,\ldots,p_{m-1},H_m, 1\{p_m \leq \lambda\}\right)$. We restrict our attention to the event $A \coloneqq \{H_m = 0, p_m \leq \lambda\}$, since the event $\{H_m = 0,\, p_{(R_{\tol}^\lambda)} = p_m\}$ cannot occur except on $A$. On $A$, which is $\mathcal{F}$-measurable, we have $p_m/\lambda \mid \mathcal{F} \sim U[0,1]$.

Let $m^\lambda = \#\{i:\; p_i \leq \lambda\}$, which is also $\mathcal{F}$-measurable. If $j_1 \leq \cdots \leq j_{m^{\lambda}} = m$ are the indices of the $p$-values that are below $\lambda$, define the modified $p$-values $p_i^\lambda = p_{j_i}/\lambda$, for $i=1,\ldots,m^\lambda$. Because the order statistics of $\lambda p_1^\lambda, \ldots, \lambda p_{m^\lambda}^\lambda$ are also the first $m^\lambda$ order statistics of $p_1,\ldots,p_m$, the quantity $R_{\tol}^\lambda$ defined in \eqref{eq-def-procedure-modification} can be rewritten as
\begin{align*}
  R_{\tol}^\lambda &\;=\; \argmin_{k = 0, \ldots, m^\lambda} \lambda p_{(k)}^\lambda -  \frac{\tol}{\hat\pi_0^\lambda}\cdot \frac{k}{m}\\[5pt]
  &\;=\; \argmin_{k = 0, \ldots, m^\lambda} p_{(k)}^\lambda - \frac{\tol^\lambda k}{m^\lambda}, \quad \text{ for }\tol^\lambda = \frac{\tol m^\lambda}{\lambda \hat\pi_0^\lambda m}.
\end{align*}
Applying Lemma~\ref{lem-last-rejection}, we have
\[
\P\left\{H_m = 0, \,p_{(R_{\tol}^\lambda)} = p_m \mid \mathcal{F}\right\} 
\le \frac{\tol^\lambda}{m^\lambda} \,\cdot\, 1_A = \frac{\tol}{\lambda \hat\pi_0^\lambda m} \,\cdot\, 1_A.
\]
Marginalizing over $\mathcal{F}$, and noting that $\P(A) = \pi_0 \lambda$, we obtain
\begin{align*}
\P\left\{H_m = 0, \,p_{(R_{\tol}^\lambda)} = p_m \right\} 
&\;\le\; \frac{\tol}{m} \cdot \E\left[\frac{\pi_0}{\hat\pi_0^{\lambda}} \mid A\right] \\[5pt]
&\;=\; \frac{\tol}{m} \cdot \frac{(1-\lambda)\pi_0}{1-F(\lambda)} \cdot \E\left[\frac{(1-F(\lambda)) m}{1 + \#\{i < m:\; p_i > \lambda\}}\right]\\[5pt]
&\;=\; \frac{\tol}{m} \cdot \frac{(1-\lambda)\pi_0}{1-F(\lambda)} \cdot (1-F(\lambda)^m)\\
&\;\leq\; \frac{\tol}{m},
\end{align*}
completing the proof. The final inequality is a standard binomial identity:
\begin{align*}
\E\left[ \frac{\beta m}{1+\text{Binom}(m-1,\beta)} \right]
&\;=\; \sum_{k=0}^{m-1} \frac{\beta m}{1+k}\binom{m-1}{k}\beta^k (1-\beta)^{m-1-k}\\[5pt]
&\;=\; \sum_{k=0}^{m-1} \binom{m}{k+1} \beta^{k+1} (1-\beta)^{m-(k+1)}\\[5pt]
&\;=\; \sum_{j=1}^{m} \binom{m}{j} \beta^{j} (1-\beta)^{m-j}\\[5pt]
&\;=\; \P\{\text{Binom}(m,\beta) \geq 1\} \\[5pt]
&\;=\; 1 - (1-\beta)^m,
\end{align*}
completing the proof.
\end{proof}

\subsection{Asymptotics} We begin with a technical lemma that allows us to account for the estimation~$\hat\tol$ of the target level~$\tol$.

\begin{lemma}
\label{lemma-uniform-integrability}
Suppose $p_i \stackrel{\iid}{\sim} f$ for $i=1,\dots,m$ and let $\tau_{\hat{\tol}}$ be the threshold obtained by running our procedure (\ref{eq-def-procedure}) at level $\hat{\tol}$. Also suppose that the first two conditions (i) and (ii) in Theorem \ref{thm-asymptotics} hold, and that the strengthened condition (\ref{eq-stronger}) holds. Then there exists a positive sequence $\varepsilon_m \to 0$ for which
\begin{align}
\label{seq-of-eps}
    \P\{m^{1/3}|\hat{\tol}-\tol| > \varepsilon_m\} = o(m^{-2/3}).
\end{align}
Given such a sequence $(\varepsilon_m)$, define the truncated random variable,
\begin{align*}
Y_m = m^{1/3}(\tau_{\hat{\tol}}-t_{\tol}) \cdot 1_{A_m},\hspace{1em} A_m \coloneqq \left\{|\hat{\tol}-\tol|\leq m^{-1/3}\varepsilon_m, |\tau_{\hat{\tol}}-t_{\tol}| \leq m^{-2/9} \right\},
\end{align*}
Then the sequence $\left\{Y_m^2\right\}$ is uniformly integrable, i.e. for any $\delta>0$ there exists an $M>0$ so large that
\begin{align*}
\sup_{m \in \mathbb{N}} \E \left( Y_m^2 \cdot 1_{\{Y_m^2 > M\}} \right) < \delta.
\end{align*}
\end{lemma}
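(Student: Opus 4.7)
The plan is to reduce uniform integrability of $\{Y_m^2\}$ to a uniform cube-exponential tail bound
\[
\P\{|Y_m| > h\} \;\le\; C e^{-c h^3},
\]
valid for all $h > 0$ and all sufficiently large $m$, with absolute constants $C, c > 0$. Given such a bound, the standard tail formula
\[
\E\!\left[Y_m^2\cdot 1\{Y_m^2 > M\}\right]
\;=\; M\,\P\{|Y_m| > \sqrt{M}\} + \int_M^\infty \P\{|Y_m| > \sqrt{s}\}\,\textnormal{d}s
\;\le\; CM e^{-c M^{3/2}} + C\!\int_M^\infty e^{-c s^{3/2}}\,\textnormal{d}s
\]
tends to zero as $M \to \infty$ uniformly in $m$, which is exactly the required uniform integrability.

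To establish the tail bound I follow the same route as in the proof of Theorem~\ref{thm-asymptotics}. By the switching relation $\{\tau_{\hat q} > t\} = \{\gren(t) > \hat q^{-1}\}$ together with the bound $\hat q \le q_+ \coloneqq q + m^{-1/3}\varepsilon_m$ on $A_m$,
\[
\{Y_m > h\} \;\subseteq\; \bigl\{\gren(t_q + m^{-1/3}h) > q_+^{-1}\bigr\}.
\]
A second-order Taylor expansion at $t_q$ using $f(t_q) = q^{-1}$ and condition~(ii) gives
\[
q_+^{-1} - f(t_q + m^{-1/3}h)
\;=\; m^{-1/3}\bigl(h\,|f'(t_q)| - \varepsilon_m/q^2\bigr) + O(m^{-2/3}h^2),
\]
which exceeds $c_1 m^{-1/3}h$ for some constant $c_1 > 0$ once $h$ is above a vanishing threshold and $h \le m^{1/9}$ (for larger $h$, $\P\{|Y_m| > h\} = 0$ by the truncation $|\tau_{\hat q} - t_q| \le m^{-2/9}$). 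Hence
\[
\P\{Y_m > h\} \;\le\; \P\!\bigl\{\gren(t_q + m^{-1/3}h) - f(t_q + m^{-1/3}h) \ge c_1 m^{-1/3}h\bigr\},
\]
and a symmetric argument using $q_- \coloneqq q - m^{-1/3}\varepsilon_m$ controls $\P\{Y_m < -h\}$.

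The final step is an exponential tail inequality for the Grenander estimator of the form
\[
\P\!\bigl\{m^{1/3}|\gren(t) - f(t)| \ge y\bigr\} \;\le\; K e^{-\kappa y^3},
\]
uniform for $t$ in a neighborhood of $t_q$ and for $y$ in the range $[0, m^{1/9}]$. Applied with $y = c_1 h$, this yields the desired $\exp(-c h^3)$ decay of $\P\{|Y_m| > h\}$. Bounds of this flavor are standard in the monotone-density estimation literature (see e.g.\ \citet{groeneboom2014nonparametric} and references therein). This last step is the main obstacle I anticipate: I must locate or adapt an exponential inequality with the required uniformity in $t$ and in the range of $y$. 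If no off-the-shelf statement suffices, I would derive the bound from first principles via a peeling argument combined with Bernstein's inequality on increments of $F_m - F$ over dyadic windows radiating from $t_q + m^{-1/3}h$; a deviation of order $m^{-1/3}h$ over an interval of length $\asymp m^{-1/3}h$ has Bernstein probability of order $\exp\bigl(-m \cdot m^{-1/3}h \cdot (m^{-1/3}h)^2\bigr) = \exp(-h^3)$, which is the source of the cube-exponential scaling that survives summation over dyadic scales.
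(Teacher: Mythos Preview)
Your plan is correct in outline and arrives at the same cube-exponential tail $\P\{|Y_m|>h\}\lesssim e^{-ch^3}$ that the paper establishes, and your tail-integral reduction to uniform integrability is exactly the paper's first step. The routes diverge at how that tail bound is obtained. You invoke the switching relation to reduce $\{\tau_{\hat q}>t_q+m^{-1/3}h\}$ to a \emph{pointwise} Grenander deviation $\{\gren(t_q+m^{-1/3}h)>q_+^{-1}\}$ and then seek an off-the-shelf inequality $\P\{m^{1/3}|\gren(t)-f(t)|\ge y\}\le Ke^{-\kappa y^3}$. The paper instead stays with the argmax characterization in terms of $F_m$, writes the event as a supremum of $F_m(t)-F_m(t_q)-\hat q^{-1}(t-t_q)$ over $t\in(t_q+hm^{-1/3},\,t_q+m^{-2/9})$, transforms to uniforms $U_i=F(p_i)$, Taylor-expands $F^{-1}$, and then peels that interval into blocks $B_{m,k}$ of width $hm^{-1/3}$ indexed by $k=1,\ldots,\lceil h^{-1}m^{1/9}\rceil$. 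Each block is handled with a binomial tail bound plus DKW, giving a summand $\exp(-c\,h^3 k^3)$ whose geometric sum yields the $e^{-ch^3}$ bound. In other words, the paper \emph{is} your fallback ``peeling plus concentration'' argument, carried out on $F_m-F$ rather than on $\gren-f$; your Grenander-tail citation would shortcut this, but the paper derives it from scratch. One small omission: you do not address the first claim of the lemma, the existence of a sequence $\varepsilon_m\to 0$ with $\P\{m^{1/3}|\hat q-q|>\varepsilon_m\}=o(m^{-2/3})$, which the paper handles by a routine diagonal argument from condition~\eqref{eq-stronger}.
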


\begin{proof}
First we show that the condition (\ref{eq-stronger}) implies the existence of a sequence $(\varepsilon_m)$ for which (\ref{seq-of-eps}) holds. Since
\begin{align*}
    m^{2/3}\P\{m^{1/3}(\hat{\tol}-\tol) > \varepsilon\} \to 0 \hspace{1em} \text{ for any }\varepsilon > 0,
\end{align*}
there exists some $m_k \in \mathbb{N}$ for which $m \geq m_k$ implies
\begin{align*}
    m^{2/3} \P\{m^{1/3}(\hat{\tol}-\tol)>1/k\}\leq 1/k,
\end{align*}
and the above statement holds for every $k \in \mathbb{N}$. Further, since the $(m_k)$ can be chosen such that $m_1 < m_2 < \dots$, let $\varepsilon_m \coloneqq 1/k$ for all $m \in (m_{k},m_{k+1}]$. Then $(\varepsilon_m)$ satisfies the property (\ref{seq-of-eps}).

Next we show the uniform integrability condition holds. Since the integrand is non-negative, its mean is the integral of the tail probability,
\begin{align}
\nonumber
\E  \left( Y_m^2 \cdot 1_{\{Y_m^2 > M\}} \right) &= \int_0^\infty \P\{Y_m^2 \cdot 1_{\{Y_m^2 > M\}} > t\} \textnormal{d} t \\
\nonumber
&= \int_0^\infty \P\{|Y_m| > \sqrt{t\vee M}\} \textnormal{d} t \\
\nonumber
&= \int_0^M \P\{|Y_m|>\sqrt{M}\} \textnormal{d} t+ \int_{\sqrt{M}}^\infty \P\{|Y_m| > h\} \cdot 2h\textnormal{d} h \\
\label{tail-probs}
&= M \P\{|Y_m|>\sqrt{M}\} + \int_{\sqrt{M}}^\infty \P\{|Y_m| > h\} \cdot 2h\textnormal{d} h .
\end{align}
For fixed $h>0$, we eventually have $hm^{-1/3}<m^{-2/9}$ and the tail probability can be written
\begin{align*}
\P\{|Y_m| > h\} &\leq  \P\{|\tau_{\hat{\tol}}-t_{\tol}| \geq hm^{-1/3},A_m\} \\
&\leq \P(hm^{-1/3}\leq \tau_{\hat{\tol}}-t_{\tol} \leq m^{-2/9},A_m) + \P(hm^{-1/3}\leq t_{\tol}-\tau_{\hat{\tol}} \leq m^{-2/9},A_m). 
\end{align*}
We just analyze the first piece because an analogous argument yields the same bound for the second one. By definition of $\tau_{\hat{\tol}}$, if $t_{\tol} + hm^{-1/3} \leq \tau_{\hat{\tol}} \leq t_{\tol} + m^{-2/9}$, then we must have
\begin{align*}
\sup_{t \in (t_{\tol} + hm^{-1/3},t_{\tol}+m^{-2/9})} F_m(t)-F_m(t_{\tol}) - {\hat{\tol}}^{-1}(t-t_{\tol}) > 0.
\end{align*}
Let $U_i \coloneqq F(p_i)$ so that $U_i \stackrel{\iid}{\sim} \text{Uniform}(0,1)$. Then the above is equivalent to
\begin{align}
\label{sup-event}
\sup_{u \in B_m} G_m(u)-G_m(F(t_{\tol})) - \hat{\tol}^{-1}(F^{-1}(u)-t_{\tol}) > 0,
\end{align}
where $G_m(u) = \frac{1}{m}\sum_{i=1}^m 1_{\{U_i \leq u\}}$ and $B_m \coloneqq [F(t_{\tol} + hm^{-1/3}),F(t_{\tol}+m^{-2/9})]$. Taylor expanding $F^{-1}(u)$ around $F(t_{\tol})$, we have
\begin{align*}
F^{-1}(u) = t_{\tol} + \tol (u-F(t_{\tol})) + \frac{(F^{-1})''(\xi)}{2} (u-F(t_{\tol}))^2
\end{align*}
for some $\xi \in (F(t_{\tol}), F(t_{\tol} + m^{-2/9}))$. Plugging this in for $F^{-1}(u)$, (\ref{sup-event}) is equivalent to
\begin{align*}
\sup_{u \in B_m} G_m(u)-G_m(F(t_{\tol})) - \hat{\tol}^{-1}\tol \cdot (u-F(t_{\tol})) + \hat{\tol}^{-1} \cdot \frac{f'(F^{-1}(\xi))}{2f(F^{-1}(\xi))^2} \cdot (u-F(t_{\tol}))^2 > 0.
\end{align*}
Let $B_{m,k}\coloneqq [F(t_{\tol}+hkm^{-1/3}),F(t_{\tol}+h(k+1)m^{-1/3})]$ for $k=1,\dots,\lceil h^{-1} m^{1/9} \rceil$. Then $B_m \subset \cup_{k=1}^{\lceil h^{-1} m^{1/9} \rceil} B_{m,k}$, and $u \in B_{m,k}$ implies that for large enough $m$,
\begin{align*}
u &\geq F(t_{\tol}+hkm^{-1/3}) \\
&\geq F(t_{\tol}) + f(t_{\tol}+hkm^{-1/3}) hkm^{-1/3} \tag{MVT} \\
&\geq F(t_{\tol})+ \frac{1}{2\tol}\cdot hkm^{-1/3} \tag{$hkm^{-1/3}\to 0$}.
\end{align*}
Now since $f'(t_{\tol})<0$ and $(u-F(t_{\tol}))^2 \geq (\frac{hkm^{-1/3}}{2\tol})^2$, it suffices to bound the probability of the intersection between $A_m$ and the event
\begin{align*}
\sup_{u \in B_m} G_m(u)-G_m(F(t_{\tol})) - \hat{\tol}^{-1}\tol \cdot (u-F(t_{\tol})) > \hat{\tol}^{-1} \cdot \frac{|f'(t_{\tol})|\left(\frac{hkm^{-1/3}}{2\tol}\right)^2}{2\tol^{-2}},
\end{align*}
where we have used that $f$ is continuously differentiable at $t_{\tol}$. By a union bound, the probability of the intersection between the above event and $A_m$ is bounded by
\begin{equation}
\begin{aligned}
\label{union-bd}
&\leq \sum_{k=1}^{\lceil h^{-1} m^{1/9} \rceil} \P\bigg\{ A_m\cap \bigg\{ \sup_{u \in B_{m,k}} G_m(u) - G_m(F(t_{\tol})) -\\
&~~~~~~~~~~~~~~~~~~~~~~~~~~~~~~~~~~~~~~~\hat{\tol}^{-1}\tol\cdot (u-F(t_{\tol})) > \hat{\tol}^{-1}\cdot \frac{|f'(t_{\tol})| (hkm^{-1/3})^2}{8}\bigg\}  \bigg\} .
\end{aligned}
\end{equation}
Note that the proportion of the $\{ U_i \}_{i=1}^n$ in the interval $[F(t_{\tol}),u]$ is equal in distribution to the proportion of the $\{ U_i \}_{i=1}^n$ in the interval $[0,u-F(t_{\tol})]$. Together with the assumptions that $\hat{\tol}^{-1}\tol = 1+o(m^{-1/3})$ on $A_m$ and $u-F(t_{\tol}) \leq \tol^{-1}h(k+1)m^{-1/3}$ for $u \in B_{m,k}$, for $m$ larger than some constant, the $k^{\textnormal{th}}$ summand is bounded by
\begin{align*}
\leq \P\left\{ \left\{\sup_{u \in B_{m,k}-F(t_{\tol})} G_m(u) - u > \frac{|f'(t_{\tol})| h^2k^2m^{-2/3}}{16\hat{\tol}}\right\} \cap A_m \right\},
\end{align*}
where $B_{m,k}-F(t_{\tol})$ is an interval $(a,b)$ with shifted endpoints
\begin{align*}
a &\coloneqq F(t_{\tol}+hkm^{-1/3})-F(t_{\tol}) \\
b &\coloneqq F(t_{\tol}+h(k+1)m^{-1/3})-F(t_{\tol}).
\end{align*}
Since $F$ is concave, it is below its linearization at $t_{\tol}$, i.e. $b \leq \tol^{-1}h(k+1)m^{-1/3} =: b' $, so the probability is bounded by
\begin{align}
\label{uniform_thing}
\leq \P\left( \left\{ \sup_{u \in (0,b')} G_m(u)- u > \frac{|f'(t_{\tol})| h^2k^2m^{-2/3}}{16\hat{\tol}} \right\} \cap A_m \right).
\end{align}
Now let $N \coloneqq mF_m(b')$ be the number of observations below $b'$. Since
\begin{align*}
G_m(u)-u = \frac{m b'}{N}\cdot G_m(u)  \left( \frac{N}{m b'} - 1\right) + b' \left(\frac{m}{N}\cdot G_m(u) -\frac{u}{b'} \right) ,
\end{align*}
the tower property and the triangle inequality give
\begin{align}
\label{piece1}
(\ref{uniform_thing}) &\leq \E \bigg[ \P\bigg( \left\{ \sup_{u \in (0,b')}\frac{m b'}{N}\cdot G_m(u)  \left| \frac{N}{m b'} - 1\right|  > \frac{|f'(t_{\tol})| h^2k^2m^{-2/3}}{32\hat{\tol}} \right\} \cap A_m \mid N \bigg) \\
\label{piece2}
&+  \P\bigg( \left\{ \sup_{u \in (0,b')}b' \left|\frac{m}{N}\cdot G_m(u) -\frac{u}{b'} \right| > \frac{|f'(t_{\tol})| h^2k^2m^{-2/3}}{32\hat{\tol}} \right\} \cap A_m \mid N \bigg) \bigg].
\end{align}
Since $\frac{m}{N} G_m(u) \leq 1$ for any $u \in (0,b')$, the first term (\ref{piece1}) is bounded
\begin{align*}
(\ref{piece1}) &\leq \P\bigg( \left| \frac{N}{m b'} - 1\right|  > \frac{|f'(t_{\tol})| h^2k^2m^{-2/3}}{32(5\tol/4)b'} \bigg) \tag{$\hat{\tol} \leq 5\tol/4$ on $A_m$} \\
&\leq 2 \exp\left(-\frac{1}{3} \cdot mb' \cdot \left(\frac{|f'(t_{\tol})| h^2k^2m^{-2/3}}{40 \tol b'} \right)^2 \right) \tag{Binomial tail bound}\\
&= \exp\left(-\frac{1}{3} \cdot \frac{f'(t_{\tol})^2 h^3 k^4}{40^2 \tol(k+1)}  \right) \tag{$b'= \tol^{-1}h(k+1)m^{-1/3}$}.
\end{align*}
For (\ref{piece2}), note that conditional on $N$, the $U_{(1)}\leq \dots \leq U_{(N)}$ are equal in distribution to the order statistics of a size $N$ sample from the Uniform$(0,b')$ distribution, and apply the DKW inequality to obtain
\begin{align*}
(\ref{piece2}) &\leq 2\E \exp\left(-2 N \cdot \frac{f'(t_{\tol})^2 h^4 k^4 m^{-4/3}}{40^2 \tol^2(b')^2} \right), \tag{DKW} \\
&= 2\E \exp\left( -\frac{2f'(t_{\tol})^2 h^2 k^4 m^{-2/3}}{ 40^2(k+1)^2 } \cdot N\right)
\end{align*}
Since $N \sim \text{Binomial}(m,b')$, the above is
\begin{align*}
2 \left(1-b'+b'\cdot e^{ -\frac{2f'(t_{\tol})^2 h^2 k^4 m^{-2/3}}{ 40^2 (k+1)^2 }} \right)^m \leq 2\exp\left(mb' \left(e^{ -\frac{2f'(t_{\tol})^2 h^2 k^4 m^{-2/3}}{ 40^2 (k+1)^2 }}-1\right) \right).
\end{align*}
Since $k \leq h^{-1}m^{1/9}$, the exponent is eventually greater than $-1/2$ for any $h$ greater than a constant, so by the inequality $e^{x}-1 \leq x/2$ for $x>-1/2$, the above is further bounded by
\begin{align*}
 \leq 2\exp\left(mb' \left( -\frac{f'(t_{\tol})^2 h^2 k^4 m^{-2/3}}{ 40^2 (k+1)^2 } \right) \right) = 2 \exp\left( - \frac{f'(t_{\tol})^2 h^3 k^4 }{40^2 \tol(k+1)} \right).
\end{align*}
Combining the bounds on (\ref{piece1}) and (\ref{piece2}), the sum over $k$ in (\ref{union-bd}) becomes
\begin{align*}
(\ref{union-bd}) &\leq \sum_{k=1}^{\lceil h^{-1}m^{1/9}\rceil} \left[ \exp\left( -\frac{1}{3} \cdot \frac{f'(t_{\tol})^2 h^3 k^4 }{40^2 \tol(k+1)} \right) + 2 \exp\left( -\frac{f'(t_{\tol})^2 h^3 k^4 }{40^2 \tol(k+1)} \right)\right] \\
&\leq 3\sum_{k=1}^\infty \exp\left( -\frac{1}{3} \cdot \frac{f'(t_{\tol})^2 h^3 }{40^2 \tol} \cdot \frac{1}{2}\cdot k^3 \right) \\
&\leq 3\cdot \frac{\exp\left( - \frac{f'(t_{\tol})^2 h^3 }{6\cdot 40^2 \tol} \right)}{1-\exp\left( - \frac{f'(t_{\tol})^2 h^3 }{6\cdot 40^2 \tol} \right)} ,
\end{align*}
by the formula for a geometric series. Integrating this against $h$ gives a finite quantity. For large enough $M$, the integral of this bound (against $h$) from $M$ to $\infty$ is small enough that the second term in (\ref{tail-probs}) is less than $\delta/2$. Similarly, $M$ can be taken so large that this bound implies the first term in (\ref{tail-probs}) is less than $\delta/2$.
\end{proof}

We now turn to the asymptotics for the threshold~$\tau_{\hat\tol}$ and the sample maximum-$\lfdr$. Our proof will use the {\em switching relation} that states, for any $t\in (0,1)$, we have almost surely
\[
\tau_{\hat\tol} \leq t \;\iff\; \gren(t) \leq \hat{\tol}^{-1}.
\]

\begin{proof}[Proof of Theorem~\ref{thm-asymptotics}] 
We will work with a local expansion of $\gren(t)$ around $t_{\tol}$ using the local parameterization $t = t_{\tol} + m^{-1/3} h$. Using 
$f(t_{\tol}) = \tol^{-1}$, the switching relation becomes
\[
m^{-1/3}(\tau_{\hat\tol} - t_{\tol}) \leq h 
\;\iff\; 
\gren\left(t_{\tol} + m^{-1/3}h\right) - f(t_{\tol}) \leq
\hat{\tol}^{-1}-\tol^{-1}.
\]
Now let $W$ denote a standard two-sided Brownian motion, and let $\mathbb{S}_{a,b}$ denote the process of left derivatives of the least concave majorant of $X_{a,b}(t) = aW(t) - bt^2$, where $a = \sqrt{f(t_{\tol})}$ and $b = |f'(t_{\tol})|/2$. Under our regularity assumptions, the introduction of \citet{dumbgen2016law} provides
\[
m^{1/3}\left(\gren\left(t_{\tol} + m^{-1/3}h\right) - f(t_{\tol})\right)\Rightarrow \mathbb{S}_{a,b}(h)
\]
in the Skorokhod topology on $D[-K, K]$ for every finite $K > 0$. Since~$m^{1/3}(\hat\tol^{-1} -\tol^{-1})\stackrel{p}{\to} 0$ by assumption, we have
\[
\P\left\{m^{1/3}\left(\tau_{\hat\tol} - t_{\tol}\right)\le h\right\}
\to  \P\left\{\mathbb{S}_{a,b}(h) \le 0\right\}.
\]
Observe that $\mathbb{S}_{a,b}(h) \le 0$ iff $t_{a,b}^*\le h$, where $t_{a,b}^*$ is the (a.s. unique) maximizer of $X_{a,b}$ (note the maximizer $t_{a,b}^*$ is always a knot in the concave majorant since the horizontal line with intercept $X_{a,b}(t_{a,b}^*)$ is a supporting line intersecting $(t_{a,b}^*,X_{a,b}(t_{a,b}^*))$). Combining this observation with the previous display, we have
\[
m^{1/3}\left(\tau_{\hat{\tol}} - t_{\tol}\right) \stackrel{d}{\to} t_{a,b}^* 
\;\stackrel{d}{=}\; \left(b/a\right)^{-2/3} Z
\;=\; \left(\frac{\tol}{4}\cdot f'(t_{\tol})^2\right)^{-1/3} Z,
\]
proving~\eqref{eq-threshold-asymptotics}. Next we turn to the $\lfdr$ asymptotics. By Taylor's theorem,
\begin{align*}
m^{1/3}\left(\lfdr(\tau_{\hat\tol}) - \pi_0\tol\right)
&= \lfdr'(s) \cdot m^{1/3}\left(\tau_{\hat\tol} - t_{\tol}\right) 
\end{align*}
for some~$s$ between $\tau_{\hat{\tol}}$ and $t_{\tol}$. Using
\[
 \lfdr'(t_{\tol}) \;=\; \frac{-\pi_0 f'(t_{\tol})}{f(t_{\tol})^2} \;=\;
 \pi_0\tol^2\cdot |f'(t_{\tol})|,
\]
and applying the continuous mapping theorem and Slutsky's theorem, we obtain
\[
    \lfdr'(s) \cdot m^{1/3}\left(\tau_{\hat\tol} - t_{\tol}\right) 
    \;\stackrel{d}{\to}\; \lfdr'(t_{\tol})\cdot  \left(\frac{\tol}{4}\cdot f'(t_{\tol})^2\right)^{-1/3} Z
    \;=\; \pi_0\tol \cdot \left(4\tol^2\cdot |f'(t_{\tol})|\right)^{1/3} Z,
\]
proving~\eqref{eq-lfdr-asymptotics}. Next, under the strengthened assumption~\eqref{eq-stronger}, fix $\varepsilon > 0$ and define the event
\begin{equation}\label{eq-truncation}
A_{\varepsilon} = \left\{ |\hat\tol -\tol| \leq m^{-1/3}\varepsilon, \;|\tau_{\hat{\tol}} - t_{\tol}| \leq m^{-2/9}\right\},
\end{equation}
and the truncated random variable
\[
Y_m = m^{1/3}(\tau_{\hat\tol}-t_{\tol}) \cdot 1_{A_\varepsilon},
\]
We will show that $\P(A_{\varepsilon}^c) = o\left(m^{-2/3}\right)$. As a result, $Y_m$ has the same limit in distribution as $m^{1/3}(\tau_{\hat\tol}-t_{\tol})$. By Lemma \ref{lemma-uniform-integrability}, the sequence $Y_m^2$ is uniformly integrable, yielding convergence of the mean and variance of $Y_m$ to the mean and variance of its limiting distribution. Then, because
\[
\E\left[\left( m^{1/3}(\tau_{\hat\tol}-t_{\tol}) - Y_m\right)^2\right] \;\leq\; m^{2/3} \P(A_{\varepsilon}^c) \;\to\; 0,
\]
we will have the same limiting mean and variance for $m^{1/3}(\tau_{\hat\tol} - t_{\tol})$.

To show that $\P(A_{\varepsilon}^c) = o\left(m^{-2/3}\right)$, let $\tol_1 =\tol - m^{-1/3}\varepsilon$ and $\tol_2 =\tol + m^{-1/3}\varepsilon$ and assume that $m$ is sufficiently large that $m^{-1/3}\varepsilon \leq m^{-2/9}/2$, and
\[
f'(t) \leq f'(t_{\tol})/2, \quad \text{ for all } \; t \in [t_{\tol} - m^{-2/9}, \;t_{\tol} + m^{-2/9}].
\]
As a result, for all $t \geq t_{\tol_2} + m^{-2/9}/2$, we have
\begin{align*}
F(t) - F(t_{\tol_2}) - \frac{t-t_{\tol_2}}{\tol_2} 
&\;\leq\;  F(t_{\tol_2} + m^{-2/9}/2) - F(t_{\tol_2}) - \frac{m^{-2/9}}{2\tol_2} \\
&\;\leq\; \frac{f'(t_{\tol})}{16}\cdot m^{-4/9}
\end{align*}
Then, since $\tau_{\hat{\tol}} \leq \tau_{\tol_2}$ a.s. on $A_{\varepsilon}$, we have
\begin{align*}
\P\left\{\tau_{\hat{\tol}} > t_{\tol} + m^{-2/9}, \;A_{\varepsilon}\right\}
&\;\leq\; \P\left\{\tau_{\tol_2} > t_{\tol_2} + m^{-2/9}/2\right\}\\[7pt]
&\;\leq\; \P\left\{\sup_{t \geq t_{\tol_2} + m^{-2/9}/2} F_m(t) - F_m(t_{\tol_2}) - \frac{t-t_{\tol_2}}{\tol_2} \geq 0\right\}\\[7pt]
&\;\leq\; \P\bigg\{\sup_{t \geq t_{\tol_2} + m^{-2/9}/2} 
\Big(F_m(t) - F(t) \\
&~~~~~~~~~~~~~~~~~~~~~~~~~~~~~~~~~- \left(F_m(t_{\tol_2}) - F(t_{\tol_2})\right)\Big) \geq \frac{|f'(t_{\tol})|}{16}\cdot m^{-4/9}\bigg\}\\[7pt]
&\;\leq\; \P\left\{\sup_{t \in [0,1]} |F_m(t) - F(t)| \geq \frac{|f'(t_{\tol})|}{32}\cdot m^{-4/9}\right\}\\[7pt]
&\;\leq\; 2 \exp\left\{-\frac{f'(t_{\tol})^2}{512}\cdot m^{1/9}\right\},
\end{align*}
by the Dvoretzky--Kiefer--Wolfowitz inequality. An analogous argument yields the same bound for $\P\{\tau_{\hat{\tol}} \leq t_{\tol} - m^{-2/9}\}$.
\end{proof}

Next, we derive rates for the regret under weighted classification loss.

\begin{proof}[Proof of Theorem~\ref{thm-regret}] Define $\tol=\alpha/\pi_0$ and $\hat{\tol}=\alpha/\hat{\pi}_0$, and let $\Delta \subseteq \{1,\ldots,m\}$ denote the symmetric difference between the two rejection sets:
\[
\Delta = \begin{cases} 
\{R_{\hat{\tol}} + 1, \ldots, R^*\} & \text{if } R_{\hat{\tol}} < R^*\\
\{R^* + 1, \ldots, R_{\hat{\tol}}\} & \text{if } R_{\hat{\tol}} > R^*\\
\emptyset & \text{if } R_{\hat{\tol}} = R^*
\end{cases}.
\]
Then we have
\[
L_\omega(H, \cR_{\hat{\tol}}) - L_\omega(H, \cR^*) \;=\; \frac{1}{m} \left( R^* - R_{\hat{\tol}} + \frac{\text{sgn}(R_{\hat{\tol}} - R^*)}{\alpha}\sum_{i\in \Delta} (1-H_i)\right).
\]
Conditional on $F_m$, we have $H_i \simind \text{Bern}(1-\lfdr(p_{(i)}))$, giving conditional expectation
\begin{align*}
\Gamma_m 
&\;\coloneqq \; \E\bigg[L_\omega(H, \cR_{\hat{\tol}}) - L_\omega(H, \cR^*)\mid F_m\bigg]\\[7pt]
&\;=\; \frac{1}{m} \left( R^* - R_{\hat{\tol}} + \frac{\text{sgn}(R_{\hat{\tol}} - R^*)}{\alpha}\sum_{i\in \Delta} \lfdr(p_{(i)})\right)\\[7pt]
&\;=\; \int_{\tau_{\hat{\tol}}}^{\tau^*} \left(1-\alpha^{-1}\lfdr(t)\right)\textnormal{d}F_m(t)\\[7pt]
&\;=\; \rho(\tau_{\hat{\tol}}) + \alpha^{-1}\int_{\tau_{\hat{\tol}}}^{\tau^*} \left(\alpha - \lfdr(u)\right)(\textnormal{d}F_m(u) - \textnormal{d}F(u))
\end{align*}
Define the same truncation event $A_\varepsilon$ as in \eqref{eq-truncation}:
\[
A_{\varepsilon} = \left\{ |\hat\tol -\tol| \leq m^{-1/3}\varepsilon, \;|\tau_{\hat{\tol}} - \tau^*| \leq m^{-2/9}\right\}.
\]
Then, because $|\Gamma_m| \leq \alpha^{-1}$ we have
\begin{equation}\label{eq-bigbound}
\begin{aligned}
\bigg|\Regret_m(\cR_{\hat{\tol}}) &- \E\left[\rho(\tau_{\hat{\tol}}) 1_{A_{\varepsilon}}\right]\bigg| \\
&\;\leq\; \alpha^{-1}\E\left[\left|\int_{\tau_{\hat{\tol}}}^{\tau^*} \left(\alpha - \lfdr(u)\right)(\textnormal{d}F_m(u) - \textnormal{d}F(u))\right| 1_{A_{\varepsilon}} \right] + \alpha^{-1}\P\left(A_{\varepsilon}^c\right).
\end{aligned}
\end{equation}
We showed in the proof of Theorem~\ref{thm-asymptotics} that $\P\left(A_{\varepsilon}^c\right) = o\left(m^{-2/3}\right)$. Furthermore,
\begin{align*}
    m^{2/3} \E\left[\rho(\tau_{\hat{\tol}}) 1_{A_{\varepsilon}}\right] 
    &\;=\; \E\left[ \frac{f'(\xi_{\tau_{\hat{\tol}}})}{2} \cdot m^{2/3} (\tau_{\hat{\tol}} - \tau^*)^2 \cdot 1_{A_{\varepsilon}} \right]\\
    &\;\to\; \frac{f'(\tau^*)}{2} \left(\frac{\alpha}{4\pi_0}\cdot f'(\tau^*)^2\right)^{-2/3} \text{Var}(Z)\\[7pt]
    &\;=\; \left(\frac{\alpha^2}{2\pi_0^2}\cdot |f'(\tau^*)|\right)^{-1/3} \text{Var}(Z),
\end{align*}
where we have used the fact that $f'(\xi_{\tau_{\hat{\tol}}})$ is uniformly close to $f'(\tau^*)$ on $A_{\varepsilon}$. 

It remains only to show that the first term on the right-hand side of \eqref{eq-bigbound} is $o\left(m^{-2/3}\right)$. On~$A_\varepsilon$,~$|\tau^*-\tau_{\hat{\tol}}|\le m^{-2/9}$, so
\begin{align*}
m^{2/3}\E&\left[\left|\int_{\tau_{\hat{\tol}}}^{\tau^*} \left(1 -\alpha^{-1} \lfdr(u)\right)(\textnormal{d}F_m(u) - \textnormal{d}F(u))\right| 1_{A_{\varepsilon}} \right] \\
&\le \alpha^{-1}\E\left[m^{2/3}\sup_{t : |t-\tau^*|\le m^{-2/9}}\left|\int_{\tau^*}^{t} \left(\lfdr(u) - \alpha\right)(\textnormal{d}F_m(u) - \textnormal{d}F(u))\right| \right]
\end{align*}
The integrand~$g(u) = \lfdr(u)-\alpha$ is positive and increasing for~$u\ge \tau^*$. Furthermore, for~$m$ large we may bound~$g'(u)\le B$ uniformly on~$[\tau^*,\tau^*+m^{-2/9}]$, so that~$g(u) \le Bm^{-2/9}$. Discretize the upper range~$[\tau^*, \tau^*+m^{-2/9}]$ into bins of width~$w$ by~$t_0=\tau^*,\ldots,t_L=\tau^*+Lw$, where $L\coloneqq\lceil \frac{m^{-2/9}}{w}\rceil$. For~$t\in[t_{l-1}, t_l]$, 
\begin{align*}
\int_{\tau^*}^{t} g(u)(\textnormal{d}F_m(u) - \textnormal{d}F(u))
&= \int_{\tau^*}^{t_l} g(u)(\textnormal{d}F_m(u) - \textnormal{d}F(u))
- \int_{t}^{t_l} g(u)(\textnormal{d}F_m(u) - \textnormal{d}F(u)) \\
&\le \int_{\tau^*}^{t_l} g(u)(\textnormal{d}F_m(u) - \textnormal{d}F(u))
+ \int_{t}^{t_l} g(u) \textnormal{d}F(u) \\
&\le \int_{\tau^*}^{t_l} g(u)(\textnormal{d}F_m(u) - \textnormal{d}F(u)) + (F(t_l) - F(t_{l-1}))Bm^{-2/9} \\
&\le \int_{\tau^*}^{t_l} g(u)(\textnormal{d}F_m(u) - \textnormal{d}F(u)) + \pi_0\alpha^{-1}Bm^{-2/9}w.
\end{align*}
Hence
\begin{align*}
\sup_{t-\tau^*\le m^{-2/9}}&\int_{\tau^*}^{t} g(u)(\textnormal{d}F_m(u) - \textnormal{d}F(u))\\
&\le \pi_0\alpha^{-1}Bm^{-2/9}w +\max_{l=0,\ldots,L}\int_{\tau^*}^{t_l} g(u)(\textnormal{d}F_m(u) - \textnormal{d}F(u)). 
\end{align*}
We control the tail of the finite maximum with a union bound and Chebyshev's inequality
\begin{equation}\label{eq-chebyshev-union}
\begin{aligned}
    \P&\left(\max_{l=0,\ldots,L}\int_{\tau^*}^{t_l} g(u)(\textnormal{d}F_m(u) - \textnormal{d}F(u)) \ge c\right) \\
    &~~~~~~~\le \frac{L+1}{c^2}\max_{l=0,\ldots,L}\textnormal{Var}\left(\int_{\tau^*}^{t_l} g(u)(\textnormal{d}F_m(u) - \textnormal{d}F(u))\right).
\end{aligned}
\end{equation}
Let $S = m\int_{\tau^*}^{t_l} g(u)\textnormal{d}F_m(u) = \sum_{i : p_i\in [\tau^*, t_l]}g(p_i)$. Conditioned on~$N = m(F_m(t_l) - F_m(\tau^*))$, the sum~$S$ has the same distribution as~$\widetilde{S} = \sum_{i=1}^Ng(\widetilde{p}_i)$ where~$\widetilde{p_i}\simiid$ with cdf~$\frac{F(\cdot)-F(\tau^*)}{F(t_l) - F(\tau^*)}$. Thus
\begin{align*}
    \textnormal{Var}\left(S\right)
    &= \E\left[\textnormal{Var}\left(S\mid N\right)\right]
    + \textnormal{Var}\left(\E\left[S\mid N\right]\right) \\
    &\le \E\left[N\textnormal{Var}\left(g(\widetilde{p}_i)\right)\right]
    + \textnormal{Var}\left(N\E\left[g(\widetilde{p}_i)\right]\right) 
    \le B^2m^{-4/9}\E N + B^2m^{-4/9}\textnormal{Var}(N) \\
    &\le 2B^2m^{5/9}(F(t_l) - F(\tau^*)) \le 2\pi_0\alpha^{-1}B^2m^{1/3}.
\end{align*}
From this bound on the variance,~\eqref{eq-chebyshev-union} becomes
\begin{align*}
\P\left(\max_{l=0,\ldots,L}\int_{\tau^*}^{t_l} g(u)(\textnormal{d}F_m(u) - \textnormal{d}F(u)) \ge c\right) 
&\le \frac{L+1}{c^2}2\pi_0\alpha^{-1}B^2m^{-5/3} \\
&\le 4\pi_0\alpha^{-1}B^2\frac{m^{-17/9}}{wc^2}.
\end{align*}
We bound the expectation by integrating the tail:
\begin{align*}
    \E&\left[\sup_{t-\tau^*\le m^{-2/9}}\int_{\tau^*}^{t} g(u)(\textnormal{d}F_m(u) - \textnormal{d}F(u))\right]\\
    &~~~~~~~\le \pi_0\alpha^{-1}Bm^{-2/9}w +\int_0^\infty \P\left\{\max_{l=0,\ldots,L}\int_{\tau^*}^{t_l} g(u)(\textnormal{d}F_m(u) - \textnormal{d}F(u))\ge c\right\}\textnormal{d}c \\
    &~~~~~~~\le \pi_0\alpha^{-1}Bm^{-2/9}w +\int_0^\infty\min\left\{1, 4\pi_0\alpha^{-1}B^2\frac{m^{-17/9}}{wc^2}\right\}\textnormal{d}c \\
    &~~~~~~~\le \pi_0\alpha^{-1}Bm^{-2/9}w + \sqrt{4\pi_0\alpha^{-1}B^2\frac{m^{-17/9}}{w}} \\
    &~~~~~~~~~~~~~~~~~~~~~~+\int_{\sqrt{4\pi_0\alpha^{-1}B^2\frac{m^{-17/9}}{w}}}^\infty 4\pi_0\alpha^{-1}B^2\frac{m^{-17/9}}{wc^2}\textnormal{d}c \\
    &~~~~~~~\le \pi_0\alpha^{-1}Bm^{-2/9}w + 4\sqrt{\pi_0\alpha^{-1}B^2\frac{m^{-17/9}}{w}}
\end{align*}
Setting~$w = \left(16\alpha\frac{m^{-13/9}}{\pi_0}\right)^{1/3}$,
\begin{align*}
m^{2/3}\E\left[\sup_{t-\tau^*\le m^{-2/9}}\int_{\tau^*}^{t} g(u)(\textnormal{d}F_m(u) - \textnormal{d}F(u))\right]
&\le 2B\left(4\pi_0\alpha^{-1}\right)^{2/3} m^{2/3}m^{-13/27}m^{-2/9} \\
&= O(m^{-1/27}).
\end{align*}
The supremum over the range~$[\tau^*-m^{-2/9},\tau^*]$ is handled similarly.
\end{proof}

Next, we derive an exact, finite-sample expression for the regret of the SL procedure under the global null, and we use this result to show that the regret is~$O(m^{-1})$ in this case.

\begin{proof}[Proof of Proposition~\ref{prop-global-null}] Since~$H_i = 0$ for all~$i$
\[
L_\omega(H, \cR_\alpha)
- L_\omega(H, \cR^{\textnormal{OPT}}_\alpha)
= \frac{\omega \hat{R}_\alpha}{m}.
\]
Recall~$\hat{R}_\alpha$ is the argmax of the random walk~$k\mapsto \alpha\frac{k}{m} - p_{(k)}$, which has exchangeable increments. We will use Corollary 11.14 of \citet{kallenberg2002foundations}, due to Sparre-Andersen, that, by exchangeability, the number of rejections~$\hat{R}_\alpha$ is equal in distribution to the time the walk stays positive:
\[
\hat{R}_\alpha \stackrel{d}{=} P_\alpha \coloneqq \sum_{k=1}^m 1\left\{p_{(k)}\le \alpha\frac{k}{m}\right\}.
\]
Under the global null, the regret thus has mean
\begin{align*}
m\E\left[L_\omega(H, \cR_\alpha)
- L_\omega(H, \cR^{\textnormal{OPT}}_\alpha)\right]
&= \omega\E\hat{R}_\alpha 
= \omega \sum_{k=1}^m\P\left\{p_{(k)} \le \alpha\frac{k}{m}\right\} \\
&\to 
\omega \sum_{k=1}^\infty\P_{U_k\sim \textnormal{Gamma}(k, k)}\left\{U_k \le \alpha\right\}, 
\end{align*}
where the last step follows from the law of rare events.
\end{proof}

\edit{
Finally, we show consistency of the SL threshold, relaxing the independence assumption.

\begin{proof}[Proof of Proposition~\ref{prop-dependence}] The threshold $t_{\tol}$ is the unique minimizer of the convex function $H(t) = t-\tol F(t)$; similarly $\tau_{\tol}$ is the largest argmin of the random convex function $H_m(t) = t-\tol\hat{F}_m(t)$, where $\hat{F}_m$ denotes the LCM of $F_m$. Because $F$ is concave, we know \[|H_m-H|_\infty =\tol|\hat{F}_m - F|_\infty\le\tol|F_m - F|_\infty\] by Marshall's inequality \citep{marshall1970discussion}. Hence $|H_m-H|_\infty\stackrel{p}{\to}0$ follows from our assumption that $|F_m - F|_\infty\stackrel{p}{\to}0$.

Define $\delta(\eps) = \sup\{|t-t_\tol|:\; H(t) \leq H(t_\tol) + 2\eps\}$. If $|t_\tol - \tau_\tol| > \delta(\eps)$, we must have $H(\tau_\tol) > H(t_\tol) + 2\eps$, which further implies
\[
H_m(\tau_\tol) + 2|H_m-H|_\infty > H_m(t_\tol)  + 2\eps.
\]
Since $\tau_\tol$ is a minimizer of $H_m$, this implies $|H_m-H|_\infty>\eps$. We conclude that, for any $\eps >0$,
\[
\P\left\{|\tau_\tol - t_\tol| > \delta(\eps)\right\}
\le \P\left\{|H_m-H|_\infty > \eps\right\}
\to 0 \qquad \text{as } m\to\infty.
\]
It remains to be shown that $\delta(\eps)\downarrow 0$ as $\eps\downarrow0$. We will show this in the case that~$t_\tol \neq 0$ or $1$. Fix any $\delta_0 \in (0, t_\tol\land 1-t_\tol)$. Let $t_- = t_\tol - \delta_0$, $t_+ = t_\tol + \delta_0$, and 
\[
\eps_0 = \frac{1}{2}\left(H(t_-)\land H(t_+) - H(t_\tol)\right).
\]
Note that $\eps_0 > 0$ since~$t_\tol$ is the unique minimizer of~$H$. By construction, $H(t) \ge H(t_\tol) + 2\eps_0$ for both~$t\in \{t_-, t_+\}$, with equality for at least one of~$t_-, t_+$. By convexity, we must have $H(t) > H(t_\tol) + 2\eps_0$ for all $t \not\in [t_-, t_+]$, so $\delta(\eps_0) = \delta_0$. Since~$\delta(\eps)$ is non-decreasing in~$\eps$, this completes the proof that $\lim_{\eps\downarrow 0}\delta(\eps) = 0$. The case where $t_\tol = 0$ or $1$ is proved similarly.
\end{proof}
}

\edit{\section{Supplementary numerical results}

In this section, we assess how well the maximum $\max_{i\in \cR}\lfdr(p_i)$ concentrates around its expectation $\maxlfdr(\cR)$ (analogous to Figure~\ref{fig-lfdr-asymptotics}) under the various violations of assumptions considered in Section~\ref{sec-robustness}. Figure~\ref{fig-lfdr-asymptotics-robustness} shows the interquartile range of~$\max_{i\in \cR}\lfdr(p_i)$ across $10^5$ simulation runs with~$m=64$ or~$m=1024$ hypotheses. The top row shows results for the equicorrelated model; the middle row, the autoregressive model; and the bottom row, the misspecified model. For $m=1024$, the maximum concentrates well in most cases.
}

\begin{figure}[p!]
	\centering
    \centerline{
	\includegraphics[height=.29\textheight]{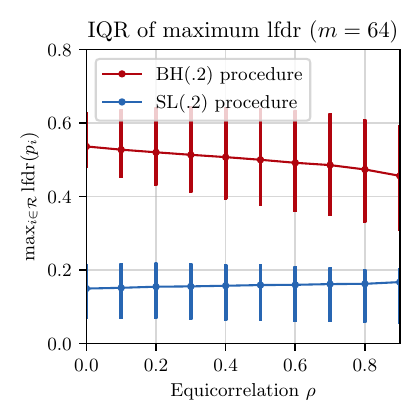}
	\includegraphics[height=.29\textheight]{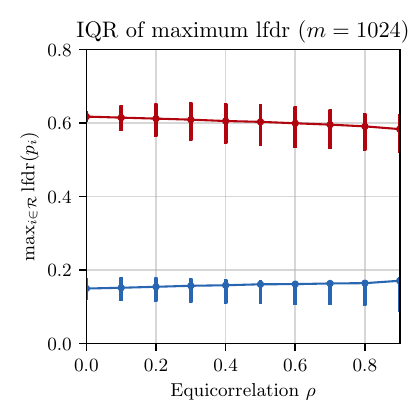}
	}
	\centerline{
	\includegraphics[height=.29\textheight]{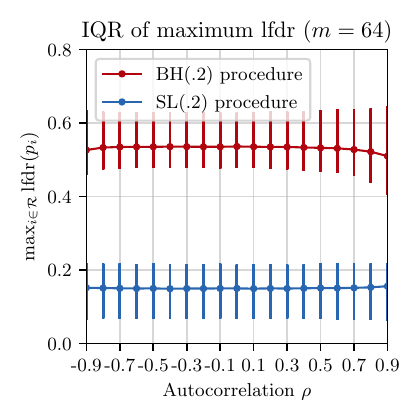}
	\includegraphics[height=.29\textheight]{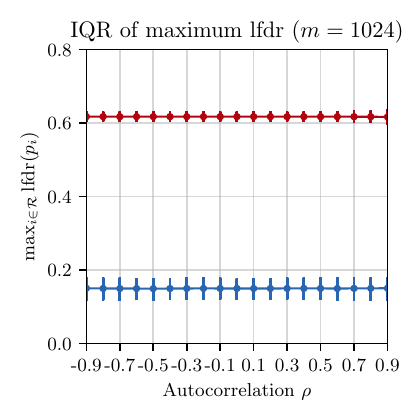}
	}
    \centerline{
	\includegraphics[height=.29\textheight]{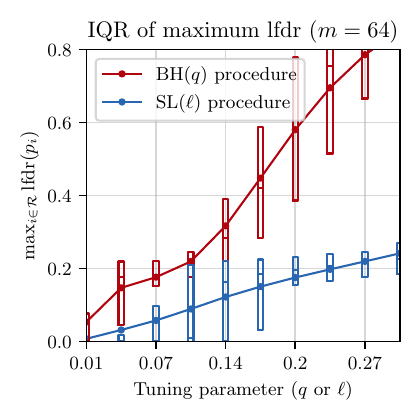}
	\includegraphics[height=.29\textheight]{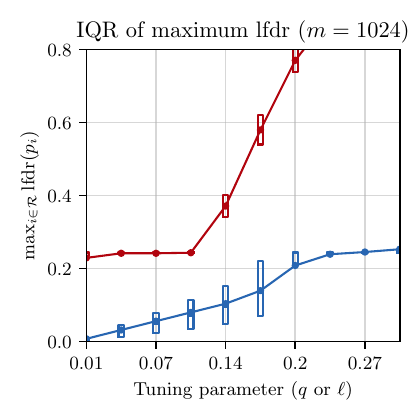}
	}
    \caption{\edit{Interquartile range of~$\max_{i\in \mathcal{R}}\lfdr(p_i)$ for the SL and BH procedures. Left column: $m=64$ hypotheses. Right column: $m=1,024$ hypotheses. Top row: results for equicorrelated model~\eqref{eq-equicorrelated-covariance}. Middle row: results for autoregressive model~\eqref{eq-autoregressive-covariance}. Bottom row: results for misspecified monotonicity constraint on the alternative density~\eqref{eq-misspecified-alternative}.}}\label{fig-lfdr-asymptotics-robustness}
\end{figure}

\end{appendix}

\FloatBarrier

\section*{Acknowledgements}

We are indebted to Lihua Lei for simplifying the proof of Lemma~\ref{lem-last-rejection}. We also thank Rina Foygel Barber, Stephen Bates, Aditya Guntuboyina, Michael I. Jordan, Peter McCullagh and Jim Pitman for insightful discussions. 

\section*{Funding}

J. A. S. was supported by the NSF [Grant DMS-2023505] and by a Vannevar Bush Faculty Fellowship [Grant N00014-21-1-2941]. W. F. was supported by the NSF [Grant DMS-1916220] and a Hellman Fellowship from Berkeley.

\nocite{shorack2009empirical, robbins1951asymptotically, takacs1967combinatorial, finner2001false}
\bibliographystyle{dcu}
\bibliography{reference.bib} 

@article{sun2007oracle,
  title={Oracle and adaptive compound decision rules for false discovery rate control},
  author={Sun, Wenguang and Cai, T Tony},
  journal={Journal of the American Statistical Association},
  volume={102},
  number={479},
  pages={901--912},
  year={2007},
  publisher={Taylor \& Francis}
}

@inproceedings{robbins1951asymptotically,
  title={Asymptotically subminimax solutions of compound statistical decision problems},
  author={Robbins, Herbert},
  booktitle={Proceedings of the Second Berkeley Symposium on Mathematical Statistics and Probability},
  year={1951},
  organization={The Regents of the University of California}
}

@article{grenander1956theory,
  title={On the theory of mortality measurement: {P}art {II}},
  author={Grenander, Ulf},
  journal={Scandinavian Actuarial Journal},
  volume={1956},
  number={2},
  pages={125--153},
  year={1956},
  publisher={Taylor \& Francis}
}

@article{langaas2005estimating,
  title={Estimating the proportion of true null hypotheses, with application to {DNA} microarray data},
  author={Langaas, Mette and Lindqvist, Bo Henry and Ferkingstad, Egil},
  journal={Journal of the Royal Statistical Society: Series B (Statistical Methodology)},
  volume={67},
  number={4},
  pages={555--572},
  year={2005},
  publisher={Wiley Online Library}
}

@article{efron2001empirical,
  title={Empirical {B}ayes analysis of a microarray experiment},
  author={Efron, Bradley and Tibshirani, Robert and Storey, John D and Tusher, Virginia},
  journal={Journal of the American statistical association},
  volume={96},
  number={456},
  pages={1151--1160},
  year={2001},
  publisher={Taylor \& Francis}
}

@article{benjamini1995controlling,
  title={Controlling the false discovery rate: a practical and powerful approach to multiple testing},
  author={Benjamini, Yoav and Hochberg, Yosef},
  journal={Journal of the Royal statistical society: series B (Methodological)},
  volume={57},
  number={1},
  pages={289--300},
  year={1995},
  publisher={Wiley Online Library}
}

@article{efron2019bayes,
  title={Bayes, oracle {B}ayes and empirical {B}ayes},
  author={Efron, Bradley},
  journal={Statistical Science},
  volume={34},
  number={2},
  pages={177--201},
  year={2019},
  publisher={Institute of Mathematical Statistics}
}

@article{genovese2004stochastic,
  title={A stochastic process approach to false discovery control},
  author={Genovese, Christopher and Wasserman, Larry},
  journal={The annals of statistics},
  volume={32},
  number={3},
  pages={1035--1061},
  year={2004},
  publisher={Institute of Mathematical Statistics}
}

@article{storey2002direct,
  title={A direct approach to false discovery rates},
  author={Storey, John D},
  journal={Journal of the Royal Statistical Society: Series B (Statistical Methodology)},
  volume={64},
  number={3},
  pages={479--498},
  year={2002},
  publisher={Wiley Online Library}
}

@InProceedings{takacs1967combinatorial,
  author={Tak{\'a}cs, Lajos},
  title={On Combinatorial Methods in the Theory of Stochastic Processes},
  booktitle={Proceedings of the Fifth {B}erkeley Symposium on Mathematical Statistics and Probability},
  year={1967},
  editor={Le Cam, Lucien Marie and Neyman, Jerzy},
  volume={3},
  pages={431--447},
  publisher={University of California Press}
}

@book{shorack2009empirical,
  title={Empirical processes with applications to statistics},
  author={Shorack, Galen R and Wellner, Jon A},
  year={2009},
  publisher={SIAM}
}

@article{finner2001false,
  title={On the false discovery rate and expected type {I} errors},
  author={Finner, Helmut and Roters, Markus},
  journal={Biometrical Journal},
  volume={43},
  number={8},
  pages={985--1005},
  year={2001},
  publisher={Wiley Online Library}
}

@article{strimmer2008unified,
  title={A unified approach to false discovery rate estimation},
  author={Strimmer, Korbinian},
  journal={BMC bioinformatics},
  volume={9},
  number={1},
  pages={1--14},
  year={2008},
  publisher={BioMed Central}
}

@book {RWD88,
    AUTHOR = {Robertson, Tim and Wright, F. T. and Dykstra, R. L.},
     TITLE = {Order restricted statistical inference},
    SERIES = {Wiley Series in Probability and Mathematical Statistics:
              Probability and Mathematical Statistics},
 PUBLISHER = {John Wiley \& Sons Ltd.},
   ADDRESS = {Chichester},
      YEAR = {1988},
     PAGES = {xx+521}}

@article{groeneboom2001computing,
  title={Computing {C}hernoff's distribution},
  author={Groeneboom, Piet and Wellner, Jon A},
  journal={Journal of Computational and Graphical Statistics},
  volume={10},
  number={2},
  pages={388--400},
  year={2001},
  publisher={Taylor \& Francis}
}

@article{reiner2003identifying,
  title={Identifying differentially expressed genes using false discovery rate controlling procedures},
  author={Reiner, Anat and Yekutieli, Daniel and Benjamini, Yoav},
  journal={Bioinformatics},
  volume={19},
  number={3},
  pages={368--375},
  year={2003},
  publisher={Oxford University Press}
}

@book{groeneboom2014nonparametric,
  title={Nonparametric estimation under shape constraints},
  author={Groeneboom, Piet and Jongbloed, Geurt},
  volume={38},
  year={2014},
  publisher={Cambridge University Press}
}

@article{neuvial2012false,
  title={On false discovery rate thresholding for classification under sparsity},
  author={Neuvial, Pierre and Roquain, Etienne},
  journal={The Annals of Statistics},
  volume={40},
  number={5},
  pages={2572--2600},
  year={2012},
  publisher={Institute of Mathematical Statistics}
}

@article{rao1969estimation,
  title={Estimation of a unimodal density},
  author={Rao, BLS Prakasa},
  journal={Sankhy{\=a}: The Indian Journal of Statistics, Series A},
  pages={23--36},
  year={1969},
  publisher={JSTOR}
}

@article{dumbgen2016law,
  title={A law of the iterated logarithm for {G}renander's estimator},
  author={D{\"u}mbgen, Lutz and Wellner, Jon A and Wolff, Malcolm},
  journal={Stochastic processes and their applications},
  volume={126},
  number={12},
  pages={3854--3864},
  year={2016},
  publisher={Elsevier}
}

@article{storey2004strong,
  title={Strong control, conservative point estimation and simultaneous conservative consistency of false discovery rates: a unified approach},
  author={Storey, John D and Taylor, Jonathan E and Siegmund, David},
  journal={Journal of the Royal Statistical Society: Series B (Statistical Methodology)},
  volume={66},
  number={1},
  pages={187--205},
  year={2004},
  publisher={Wiley Online Library}
}

@article{muralidharan2010empirical,
  title={An empirical {B}ayes mixture method for effect size and false discovery rate estimation},
  author={Muralidharan, Omkar},
  journal={The Annals of Applied Statistics},
  pages={422--438},
  year={2010},
  publisher={JSTOR}
}

@article{efron2008microarrays,
  title={Microarrays, empirical {B}ayes and the two-groups model},
  author={Efron, Bradley},
  journal={Statistical science},
  pages={1--22},
  year={2008},
  publisher={JSTOR}
}

@article{stephens2017false,
  title={False discovery rates: a new deal},
  author={Stephens, Matthew},
  journal={Biostatistics},
  volume={18},
  number={2},
  pages={275--294},
  year={2017},
  publisher={Oxford University Press}
}

@article{efron2004large,
  title={Large-scale simultaneous hypothesis testing: the choice of a null hypothesis},
  author={Efron, Bradley},
  journal={Journal of the American Statistical Association},
  volume={99},
  number={465},
  pages={96--104},
  year={2004},
  publisher={Taylor \& Francis}
}

@article{aubert2004determination,
  title={Determination of the differentially expressed genes in microarray experiments using local FDR},
  author={Aubert, Julie and Bar-Hen, Avner and Daudin, Jean-Jacques and Robin, St{\'e}phane},
  journal={BMC bioinformatics},
  volume={5},
  number={1},
  pages={1--9},
  year={2004},
  publisher={BioMed Central}
}

@article{robin2007semi,
  title={A semi-parametric approach for mixture models: Application to local false discovery rate estimation},
  author={Robin, St{\'e}phane and Bar-Hen, Avner and Daudin, Jean-Jacques and Pierre, Laurent},
  journal={Computational statistics \& data analysis},
  volume={51},
  number={12},
  pages={5483--5493},
  year={2007},
  publisher={Elsevier}
}

@article{pounds2003estimating,
  title={Estimating the occurrence of false positives and false negatives in microarray studies by approximating and partitioning the empirical distribution of $p$-values},
  author={Pounds, Stan and Morris, Stephan W},
  journal={Bioinformatics},
  volume={19},
  number={10},
  pages={1236--1242},
  year={2003},
  publisher={Oxford University Press}
}

@article{pounds2004improving,
  title={Improving false discovery rate estimation},
  author={Pounds, Stan and Cheng, Cheng},
  journal={Bioinformatics},
  volume={20},
  number={11},
  pages={1737--1745},
  year={2004},
  publisher={Oxford University Press}
}

@article{liao2004mixture,
  title={A mixture model for estimating the local false discovery rate in {DNA} microarray analysis},
  author={Liao, JG and Lin, Yong and Selvanayagam, Zachariah E and Shih, Weichung Joe},
  journal={Bioinformatics},
  volume={20},
  number={16},
  pages={2694--2701},
  year={2004},
  publisher={Oxford University Press}
}

@article{patra2016estimation,
  title={Estimation of a two-component mixture model with applications to multiple testing},
  author={Patra, Rohit Kumar and Sen, Bodhisattva},
  journal={Journal of the Royal Statistical Society: Series B (Statistical Methodology)},
  volume={78},
  number={4},
  pages={869--893},
  year={2016},
  publisher={Wiley Online Library}
}

@article{scheid2004stochastic,
  title={A stochastic downhill search algorithm for estimating the local false discovery rate},
  author={Scheid, Stefanie and Spang, Rainer},
  journal={IEEE/ACM Transactions on Computational Biology and Bioinformatics},
  volume={1},
  number={3},
  pages={98--108},
  year={2004},
  publisher={IEEE}
}

@article{schweder1982plots,
  title={Plots of $p$-values to evaluate many tests simultaneously},
  author={Schweder, Tore and Spj{\o}tvoll, Eil},
  journal={Biometrika},
  volume={69},
  number={3},
  pages={493--502},
  year={1982},
  publisher={Oxford University Press}
}

@article{chernoff1964estimation,
  title={Estimation of the mode},
  author={Chernoff, Herman},
  journal={Annals of the Institute of Statistical Mathematics},
  volume={16},
  number={1},
  pages={31--41},
  year={1964},
  publisher={Springer}
}

@article{dykstra1999distribution,
  title={The distribution of the argmax of two-sided {B}rownian motion with quadratic drift},
  author={Dykstra, Richard and Carolan, Chris},
  journal={Journal of Statistical Computation and Simulation},
  volume={63},
  number={1},
  pages={47--58},
  year={1999},
  publisher={Taylor \& Francis}
}

@book{kallenberg2002foundations,
  title={Foundations of modern probability},
  author={Kallenberg, Olav},
  edition={2},
  year={2002},
  publisher={Springer}
}

@article{tucker1959generalization,
  title={A generalization of the {G}livenko-{C}antelli theorem},
  author={Tucker, Howard G},
  journal={The Annals of Mathematical Statistics},
  volume={30},
  number={3},
  pages={828--830},
  year={1959}
}

@article{finner2002multiple,
  title={Multiple hypotheses testing and expected number of type {I} errors},
  author={Finner, H and Roters, M},
  journal={The Annals of Statistics},
  volume={30},
  number={1},
  pages={220--238},
  year={2002},
  publisher={Institute of Mathematical Statistics}
}

@article{benjamini2006adaptive,
  title={Adaptive linear step-up procedures that control the false discovery rate},
  author={Benjamini, Yoav and Krieger, Abba M and Yekutieli, Daniel},
  journal={Biometrika},
  volume={93},
  number={3},
  pages={491--507},
  year={2006},
  publisher={Oxford University Press}
}

@article{marshall1970discussion,
  title={Discussion on {B}arlow and van {Z}wet's paper},
  author={Marshall, AW},
  journal={Nonparametric Techniques in Statistical Inference},
  volume={1969},
  pages={174--176},
  year={1970},
  publisher={Cambridge University Press New York}
}

@article{seeger1968note,
  title={A note on a method for the analysis of significances en masse},
  author={Seeger, Paul},
  journal={Technometrics},
  volume={10},
  number={3},
  pages={586--593},
  year={1968},
  publisher={Taylor \& Francis}
}

@article{benjamini2000adaptive,
  title={On the adaptive control of the false discovery rate in multiple testing with independent statistics},
  author={Benjamini, Yoav and Hochberg, Yosef},
  journal={Journal of educational and Behavioral Statistics},
  volume={25},
  number={1},
  pages={60--83},
  year={2000},
  publisher={Sage Publications Sage CA: Los Angeles, CA}
}

\end{document}